\newtheorem{theorem}{Theorem}[section]
\newtheorem{lemma}[theorem]{Lemma}
\newtheorem{corollary}[theorem]{Corollary}
\newtheorem{proposition}[theorem]{Proposition}
\newtheorem{definition}{Definition}
\newcommand{\ba}{\begin{array}}
\newcommand{\ea}{\end{array}}
\newcommand{\beq}{\begin{equation}}
\newcommand{\eeq}{\end{equation}}
\newcommand{\beqa}{\begin{eqnarray}}
\newcommand{\eeqa}{\end{eqnarray}}
\newcommand{\beqas}{\begin{eqnarray*}}
\newcommand{\eeqas}{\end{eqnarray*}}
\newcommand{\bi}{\begin{itemize}}
\newcommand{\ei}{\end{itemize}}
\newcommand{\gap}{\hspace*{2em}}
\newcommand{\nn}{\nonumber}
\newcommand{\simless}[2]{\Delta^{#2}_{\le}(#1)}
\newcommand{\simequal}[2]{\Delta^{#2}_{=}(#1)}
\newcommand{\Si}[1]{{\cal S}^{#1}}
\newcommand{\Sip}[1]{{\cal S}^{#1}_{+}}
\def\vgap{\vspace*{.1in}}
\def\eqnok#1{(\ref{#1})}
\def\argmin{{\rm argmin}}
\def\QED{\ifhmode\unskip\nobreak\fi\ifmmode\ifinner\else\hskip5pt\fi\fi
  \hbox{\hskip5pt\vrule width5pt height5pt depth1.5pt\hskip1pt}}
\def\aa{{\bf a}}
\def\bu{\bullet}
\def\bb{{\bf b}}
\def\ee{{\bf e}}
\def\bX{{\bar X}}
\def\bx{{x^0}}
\def\cC{{\cal C}}
\def\cE{{\cal E}}
\def\cG{{\cal G}}
\def\cL{{\cal L}}
\def\cN{{\cal N}}
\def\cS{{\cal S}}
\def\cB{{\cal B}}
\def\cU{{\cal U}}
\def\cV{{\cal V}}
\def\eps{{\epsilon}}
\def\fg{{f_{\gamma}}}
\def\diag{{\rm Diag}}
\def\F{{\rm F}}
\def\Sn{{\cal S}^n}
\def\Snpp{{\cal S}^n_{++}}
\def\trx{{\tilde r_x}}
\def\tOmega{{\tilde \Omega}}
\def\tr{{\rm Tr}}
\def\tu{{\tilde u}}
\def\tU{{\tilde U}}
\def\tV{{\tilde V}}
\def\vec{{\rm vec}}
\def\K{{H}}
\def\setU{{U}}
\def\setX{{X}}
\def\vsigma{{\sigma_V}}
\def\xeps{{x_{\epsilon}^\gamma}}
\def\xleps{{x_{\epsilon}^\lambda}}
\def\uh{{p_U}}
\def\dU{{p_U}}
\def\usigma{{\sigma_U}}
\def\DU{{D_U}}
\title{Convex Optimization Methods for Dimension Reduction and Coefficient 
Estimation in Multivariate Linear Regression}
\author{
Zhaosong Lu%
\thanks{Department of Mathematics, Simon Fraser University, Burnaby, 
BC V5A 1S6, Canada (Email: {\tt zhaosong@sfu.ca}).
This author was supported in part by SFU President's Research Grant
and NSERC Discovery Grant.}
\and 
Renato D. C. Monteiro%
\thanks{School of Industrial and Systems 
Engineering, Georgia Institute of Technology, Atlanta, GA
30332-0205, USA (Email: {\tt monteiro@isye.gatech.edu}).
This author was supported in part by 
NSF Grants CCF-0430644 and CCF-0808863 and
ONR Grants N00014-05-1-0183 and N00014-08-1-0033.}
\and
Ming Yuan%
\thanks{School of Industrial and Systems Engineering, Georgia 
Institute of Technology, Atlanta, GA 30332-0205, USA 
(Email: {\tt myuan@isye.gatech.edu}). This author was supported in 
part by NSF Grants DMS-0624841 and DMS-0706724.}
}
\date{January 10, 2008 (Revised: March 6, 2009)}
\begin{document}   

\maketitle

\begin{abstract}
In this paper, we study convex optimization methods for computing the 
trace norm regularized least squares estimate in multivariate linear 
regression. The so-called factor estimation and selection (FES) method, 
recently proposed by Yuan et al.\ \cite{YuEkLuMo07}, conducts parameter 
estimation and factor selection simultaneously and have been shown to 
enjoy nice properties in both large and finite samples. To compute the 
estimates, however, can be very challenging in practice because of the 
high dimensionality and the trace norm constraint. In this paper, we 
explore a variant of Nesterov's smooth method \cite{tseng08} and 
interior point methods for computing the penalized least squares estimate. 
The performance of these methods is then compared using a set of randomly 
generated instances. We show that the variant of Nesterov's smooth method 
\cite{tseng08} generally outperforms the interior point method implemented 
in SDPT3 version 4.0 (beta) \cite{ToTuTo06-1} substantially . Moreover, 
the former method is much more memory efficient. 

\vskip14pt

\noindent
{\bf Key words:}
Cone programming, smooth saddle point problem, first-order method, interior 
point method, multivariate linear regression, trace norm, dimension reduction. 

\vskip14pt

\noindent
{\bf AMS 2000 subject classification:}
90C22, 90C25, 90C47, 65K05, 62H12, 62J05

\end{abstract}

\section{Introduction}

Multivariate linear regression is routinely used in statistics to model
the predictive relationships of multiple related responses on a common
set of predictors. In general multivariate linear regression, we have
$l$ observations on $q$ responses $\bb=(b_1, \ldots, b_q)'$ and $p$
explanatory variables $\aa=(a_1,\ldots,a_p)'$, and
\beq \label{multi-reg1}
B=A U + E,
\eeq
where $B=(\bb^1,\ldots,\bb^l)'\in \Re^{l\times q}$ and $A=(\aa^1,
\ldots,\aa^l)'\in\Re^{l\times p}$ consists of the data of responses
and explanatory variables, respectively, $U\in \Re^{p \times q}$
is the coefficient matrix, $E=(\ee^1,\ldots,\ee^l)'\in\Re^{l\times q}$
is the regression noise, and all $\ee^i$s are independently sampled
from $\cN(0, \Sigma)$.

Classical estimators for the coefficient matrix $U$ such as the least
squares estimate are known to perform sub-optimally because they do not
utilize the information that the responses are related. This problem
is exacerbated when the dimensionality $p$ or $q$ is moderate or large.
Linear factor models are widely used to overcome this problem. In the
linear factor model, the response $B$ is regressed against a small
number of linearly transformed explanatory variables, which are often referred
to as factors. More specifically, the linear factor model can be expressed
as
\beq \label{multi-reg2}
B=F\Omega+E,
\eeq
where $\Omega\in \Re^{r \times q}$, and $F=A\Gamma$ for some
$\Gamma\in \Re^{p \times r}$ and $r\le \min\{p,q\}$. The columns of $F$,
namely, $F_j \ (j=1,\ldots, r)$ represent the so-called factors. Clearly
\eqnok{multi-reg2} is an alternative representation of \eqnok{multi-reg1}
with $U=\Gamma\Omega$, and the dimension of the estimation problem reduces
as $r$ decreases.
Many popular methods including canonical correction (Hotelling \cite{Hot35, Hot36}),
reduced rank (Anderson \cite{Ander51}, Izenman \cite{Ize75}, Reinsel and Velu
\cite{ReVe98}), principal components (Massy \cite{Mas65}), partial least squares
(Wold \cite{Wol75}) and joint continuum regression (Brooks and Stone \cite{BrSt94})
among others can all be formulated in the form of linear factor
regression. They differ in the way in which the factors are determined.

Given the number of factors $r$, estimation in the linear factor model most often 
proceeds in two steps: the factors, or equivalently $\Gamma$, are first constructed,
and then $\Omega$ is estimated by least squares for \eqnok{multi-reg2}.
It is obviously of great importance to be able to determine $r$ for \eqnok{multi-reg2}. 
For a smaller number of factors, a more accurate estimate is expected since there 
are fewer free parameters. But too few factors may not be sufficient to describe 
the predictive relationships. In all of the aforementioned methods, the number of 
factors $r$ is chosen in a separate step from the estimation of \eqnok{multi-reg2}
through either hypothesis testing or cross-validation. The coefficient
matrix is typically estimated on the basis of the number of factors
selected. Due to its discrete nature, this type of procedure can be very
unstable in the sense of Breiman \cite{Bre96}: small changes in the data
can result in very different estimates.

Recently, Yuan et al.\ \cite{YuEkLuMo07} proposed a novel method that can
simultaneously choose the number of factors, determine the factors and estimate
the factor loading matrix $\Omega$. It has been demonstrated that the so-called
factor estimation and selection (FES) method combines and retains the advantages
of the existing methods. FES is a constrained least square estimate where the trace 
norm or the nuclear norm (or the Ky Fan $m$-norm where $m :=\min\{p,q\}$) of the 
coefficient matrix $U$ is forced to be smaller than an upper bound:
\beq \label{multi-reg3}
\ba{ll}
\min\limits_{U} & \tr((B-AU)W(B-AU)') \\
\mbox{s.t.}   & \sum\limits_{i=1}^m \sigma_i(U) \le M.
\ea
\eeq
where $W$ is a positive definite weight matrix. Common choices of the
weight matrix $W$ include $\Sigma^{-1}$ and $I$. To fix ideas, we assume
throughout the paper that $W=I$. Under this assumption,
\eqnok{multi-reg3} is equivalent to
\beq \label{multi-reg4}
\ba{ll}
\min\limits_{U} & \|B-AU\|^2_F \\
\mbox{s.t.}   & \sum\limits_{i=1}^m \sigma_i(U) \le M.
\ea
\eeq
It is shown in Yuan et al.\ \cite{YuEkLuMo07} that the constraint used by FES
encourages sparsity in the factor space and at the same time gives shrinkage
coefficient estimates and thus conducts dimension reduction and estimation
simultaneously in the multivariate linear model. 
Recently, Bach \cite{Bach08} further provided necessary and sufficient conditions for 
rank consistency of trace norm minimization with the square loss by considering 
the Lagrangian relaxation of \eqnok{multi-reg4}. He also proposed a Newton-type method 
for finding an approximate solution to the latter problem. It shall be mentioned 
that his method is only suitable for the problems where $p$ and $q$ are not too 
large.  

In addition, the trace norm relaxation has been used in literature for rank 
minimization problem. In particular, Fazel et al.\ \cite{FaHiBo01} considered minimizing 
the rank of a matrix $U$ subject to $U \in \cC$, where $\cC$ is a closed convex set. 
They proposed a convex relaxation to this problem by replacing the rank of $U$ by the 
trace norm of $U$. Recently, Recht et al.\ \cite{ReFaPa07} showed that under some 
suitable conditions, such a convex relaxation is tight when $\cC$ is an affine 
manifold. The authors of \cite{ReFaPa07} also discussed some first- and second-order 
optimization methods for solving the trace norm relaxation problem.    

The goal of this paper is to explore convex optimization methods, namely, a variant of 
Nesterov's smooth method \cite{tseng08}, and
interior point methods for solving \eqnok{multi-reg4}.
We also compare the performance of these methods on a set of randomly
generated instances. We show that the variant of Nesterov's smooth method \cite{tseng08} 
generally outperforms the interior point method implemented in the code SDPT3 version 
4.0 (beta) \cite{ToTuTo06-1} substantially, and that the former method 
requires much less memory than the latter one.

The rest of this paper is organized as follows. In Subsection \ref{notation},
we introduce the notation that is used throughout the paper.
In Section \ref{res-eig_singv}, we present some technical results that
are used in our presentation. In Section \ref{reform}, we provide a
simplification for problem \eqnok{multi-reg4}, and present
cone programming and smooth saddle point reformulations for it.
In Section \ref{numerical}, we review a variant of Nesterov's smooth method 
\cite{tseng08} and discuss the details of its implementation for solving the 
aforementioned smooth saddle point reformulations of \eqnok{multi-reg4}.
In Section \ref{comp}, we present computational results comparing
a well-known second-order interior-point
method applied to the aforementioned cone programming reformulations
of \eqnok{multi-reg4} with the variant of Nesterov's smooth method for solving smooth 
saddle point reformulations of \eqnok{multi-reg4}. Finally, 
we present some concluding remarks in Section \ref{concl-remark}
and state some additional technical results in the Appendix.

\subsection{Notation}
\label{notation}

The following notation is used throughout our paper. For any real 
number $\alpha$, $[\alpha]^+$ denotes the nonnegative part of $\alpha$,
that is, $[\alpha]^+ = \max\{\alpha,0\}$.
The symbol $\Re^p$ denotes the $p$-dimensional 
Euclidean space. We denote by $e$ the vector of all ones whose 
dimension should be clear from the context.
For any $w\in\Re^p$, $\diag(w)$ denotes the 
$p \times p$ diagonal matrix whose $i$th diagonal element
is $w_i$ for $i=1,\ldots,p$. The Euclidean norm in $\Re^p$ is
denoted by $\|\cdot\|$.

We let $\Si{n}$ denote the space of
$n\times n$ symmetric matrices, and $Z  \succeq 0$
indicate that $Z$ is positive semidefinite. We also write
$\Sip{n}$ for $\{Z \in \Sn : Z \succeq 0 \}$, and $\Snpp$ for its
interior, the set of positive definite matrices in $\Sn$.
For any $Z \in \Si{n}$, we let $\lambda_i(Z)$, for $i=1,...,n$, denote
the $i$th largest eigenvalue of $Z$,
$\lambda_{\min}(Z)$ (resp., $\lambda_{\max}(Z)$)
denote the minimal (resp., maximal) eigenvalue of $Z$, and define
$\|Z\|_{\infty} := \max_{1\le i\le n} |\lambda_i(Z)|$ and 
$\|Z\|_1=\sum_{i=1}^n |\lambda_i(Z)|$.
Either the identity matrix or operator will be denoted by $I$.

The space of all $p \times q$ matrices with real entries 
is denoted by $\Re^{p \times q}$.  Given matrices $X$ and $Y$ in 
$\Re^{p \times q}$, the standard inner product is defined by 
$X \bullet Y = \tr (X^TY)$, where $\tr(\cdot)$ denotes the trace of a 
matrix.
The operator norm and the Frobenius norm
of a $p \times q$-matrix $X$ are defined as
$\|X\| := \max \{\|Xu\| : \|u\| \le 1 \} = [\lambda_{\max}(X^TX)]^{1/2}$
and $\|X\|_F := \sqrt{X \bullet X}$, respectively.
Given any $X\in\Re^{p \times q}$, 
we let $\vec(X)$ denote the vector in $\Re^{pq}$ obtained
by stacking the columns of $X$ according to the order in which
they appear in $X$, and $\sigma_i(X)$ denote the $i$th largest
singular value of $X$ for $i=1,\ldots, \min\{p,q\}$.
(Recall that $\sigma_i(X)=[\lambda_i(X^TX)]^{1/2}
= [\lambda_i(XX^T)]^{1/2}$ for $i=1,\ldots, \min\{p,q\}$.)
Also, let $\cG: \Re^{p \times q} \to \Re^{(p+q) \times (p+q)}$ be  
defined as 
\beq \label{Gx}
\cG(X) := \left(\ba{ll}
0 & X^T \\
X & 0
\ea \right),  \ \forall X \in \Re^{p \times q}.
\eeq

The following sets are used throughout the paper:
\beqas
\cB_F^{p \times q}(r) &:=& \{ X \in \Re^{p \times q} : \|X\|_F \le r \}, \\
\simequal{r}{n} &:=& \{ Z \in \Si{n} : \|Z\|_1 = r, \, Z \succeq 0 \}, \\
\simless{r}{n} &:=& \{ Z \in \Si{n} : \|Z\|_1 \le r, \, Z \succeq 0 \}, \\
\cL^p &:=& \left\{x\in \Re^p : x_1 \geq \sqrt{x^2_2+\ldots+x^2_p} \, \right\},
\eeqas
where the latter is the well-known $p$-dimensional second-order cone.

Let $\cU$ be a normed vector space whose norm is denoted by $\|\cdot\|_\cU$.
The dual space of $\cU$, denoted by $\cU^*$, is the normed vector
space consisting of all linear functionals of $u^*: \cU \to \Re$, endowed 
with the dual norm $\|\cdot\|_\cU^*$ defined as
\[
\|u^*\|_\cU^* = \max\limits_u \{\langle u^*, u \rangle: \ \|u\|_\cU \le 1 \},
\ \ \ \forall u^*\in \cU^*, 
\]
where $\langle u^*,u\rangle := u^*(u)$ is the value of
the linear functional $u^*$ at $u$.

If $\cV$ denotes another normed vector space with norm $\|\cdot\|_\cV$, and
$\cE:\cU \to \cV^*$ is a linear operator, 
the operator norm of $\cE$ is defined as 
\beq \label{normC1}
\|\cE\|_{\cU,\cV} = \max\limits_u\{\|\cE u\|_\cV^*: \|u\|_\cU \le 1\}.
\eeq
A function $f: \Omega \subseteq U \to \Re$ is said to be
$L$-Lipschitz-differentiable
with respect to $\|\cdot\|_\cU$ if it is differentiable and
\beq \label{sete}
\|\nabla f(u) - \nabla f(\tilde u)\|_\cU^* \le L \|u - \tilde u\|_\cU,
\ \ \ \forall u, \tilde u \in \Omega. 
\eeq

\section{Some results on eigenvalues and singular values}
\label{res-eig_singv}

In this subsection, we establish some technical results about
eigenvalues and singular values which will be used in our presentation.

The first result gives some well-known identities involving the
maximum eigenvalue of a real symmetric matrix.

\begin{lemma} \label{eigprop}
For any $Z\in \cS^n$ and scalars $\alpha>0$ and $\beta \in \Re$,
the following statements hold:
\beqa
\lambda_{\max}(Z) &=& \max\limits_{W \in \simequal{1}{n}} Z \bullet W,
\label{eigprop3} \\
\left[ \alpha \lambda_{\max}(Z) + \beta \right]^+ &=&
\max\limits_{W \in \simless{1}{n}} \alpha Z \bullet W + \beta \tr(W).
\label{eigprop1}
\eeqa
\end{lemma}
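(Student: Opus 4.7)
My plan is to establish the two identities separately, but both follow from the same variational idea: for any $W \succeq 0$ we have $\|W\|_1 = \tr(W)$, so the feasible sets can be rewritten in terms of the trace, and then $Z \bullet W$ is controlled by $\lambda_{\max}(Z)$ times $\tr(W)$.

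For \eqref{eigprop3}, I would argue as follows. Any $W \in \simequal{1}{n}$ satisfies $W \succeq 0$ and $\tr(W) = \|W\|_1 = 1$. Take a spectral decomposition $W = \sum_{i=1}^n \mu_i v_i v_i^T$ with $\mu_i \ge 0$, $\sum_i \mu_i = 1$, and $\{v_i\}$ orthonormal. Then
\[
Z \bullet W \;=\; \sum_{i=1}^n \mu_i\, v_i^T Z v_i \;\le\; \lambda_{\max}(Z) \sum_{i=1}^n \mu_i \;=\; \lambda_{\max}(Z),
\]
which gives the ``$\ge$'' direction. For the reverse inequality, let $v$ be a unit eigenvector of $Z$ corresponding to $\lambda_{\max}(Z)$ and take $W = v v^T$; this $W$ lies in $\simequal{1}{n}$ and achieves $Z \bullet W = \lambda_{\max}(Z)$.

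For \eqref{eigprop1}, I would reduce to \eqref{eigprop3} by scaling. Any $W \in \simless{1}{n}$ with $\tr(W) = t \in (0,1]$ can be written as $W = t W'$ with $W' \in \simequal{1}{n}$; the case $t = 0$ gives $W = 0$. Applying \eqref{eigprop3} to $W'$ yields
\[
\alpha Z \bullet W + \beta \tr(W) \;=\; t\bigl(\alpha Z \bullet W' + \beta\bigr) \;\le\; t\bigl(\alpha \lambda_{\max}(Z) + \beta\bigr),
\]
where $\alpha > 0$ was used. Now maximize over $t \in [0,1]$: if $\alpha \lambda_{\max}(Z) + \beta \ge 0$ the optimum is $t = 1$ yielding $\alpha \lambda_{\max}(Z) + \beta$; if $\alpha \lambda_{\max}(Z) + \beta < 0$ the optimum is $t = 0$ yielding $0$. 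In both cases the value equals $[\alpha \lambda_{\max}(Z) + \beta]^+$, establishing ``$\ge$''. For attainment, take either $W = v v^T$ (in the first case) or $W = 0$ (in the second), both of which lie in $\simless{1}{n}$.

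I do not anticipate any genuine obstacle here; the only things to be careful about are (i) observing that $\|W\|_1 = \tr(W)$ for $W \succeq 0$, so that the constraint on singular/eigenvalues becomes a linear trace constraint, and (ii) correctly handling the sign of $\alpha \lambda_{\max}(Z) + \beta$ when maximizing over the scaling parameter $t$, which is exactly what produces the $[\,\cdot\,]^+$.
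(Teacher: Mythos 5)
Your proof is correct and follows essentially the same route as the paper: identity \eqref{eigprop3} is the standard variational characterization of $\lambda_{\max}$ (the paper simply cites it as well known, while you give the short spectral-decomposition argument), and for \eqref{eigprop1} both you and the paper use the scaling decomposition $W = tW'$ with $W' \in \simequal{1}{n}$, $t \in [0,1]$, and maximize the resulting linear function of $t$ over $[0,1]$ to produce the positive part. The only cosmetic difference is that the paper first absorbs $\beta$ into the matrix as $\alpha Z + \beta I$ and applies $[\gamma]^+ = \max_{t\in[0,1]} t\gamma$ directly, whereas you keep $\beta\,\tr(W)$ as a separate term; the two bookkeeping choices are equivalent.
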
 

\begin{proof}
Identity \eqnok{eigprop3} is well-known. 
We have
\beqas
\left[\alpha \lambda_{\max}(Z) + \beta \right]^+ &=&
\left[\lambda_{\max}(\alpha Z + \beta I) \right]^+ \ = \
\max_{t \in [0,1]} t \lambda_{\max}(\alpha Z + \beta I) \\
&=& \max_{t \in [0,1], W \in \simequal{1}{n}}
t (\alpha Z + \beta I) \bullet W
\ = \ \max_{W \in \simless{1}{n}} (\alpha Z + \beta I) \bullet W,
\eeqas
where the third equality is due to \eqnok{eigprop3} and the fourth
equality is due to the fact that
$tW$ takes all possible values in $\simless{1}{n}$ under the
condition that $t \in [0,1]$ and $W \in \simequal{1}{n}$.
\end{proof}

\vgap

The second result gives some characterizations of the sum of the
$k$ largest eigenvalues of a real symmetric matrix.

\begin{lemma} \label{sumeig-prop}
Let $Z \in \cS^n$ and integer $1 \le k \le n$ be given.
Then, the following statements hold:
\begin{itemize}
\item[a)]
For $t\in \Re$, we have
\[
\sum\limits_{i=1}^k \lambda_i(Z) \le t  \ \Leftrightarrow \
\left\{ \ba{lcl}
t- ks - \tr(Y) & \ge & 0,  \\
Y -Z + sI & \succeq & 0, \\
Y & \succeq & 0,
\ea \right.
\]
for some $Y \in \Si{n}$ and $s\in\Re$;
\item[b)] The following identities hold:
\beqa
\sum\limits_{i=1}^k \lambda_i(Z)
&=& \min\limits_{Y \in \Sip{n}} \max_{W \in \simequal{1}{n}}
k (Z - Y) \bullet W + \tr(Y) \label{cchar1} \\
&=& \max\limits_{W \in \Si{n}}
\{ Z \bullet W : \tr(W)=k, \, 0 \preceq W \preceq I\}. \label{cchar2}
\eeqa
\item[c)]
For every scalar $\alpha>0$ and $\beta \in \Re$,
the following identities hold:
\beqa
\left[ \alpha \sum\limits_{i=1}^k \lambda_i(Z) + \beta \right]^+
&=& \min\limits_{Y \in \Sip{n}} \max\limits_{W \in \simless{1}{n}}
k (\alpha Z - Y) \bullet W + [ \beta + \tr(Y)] \, \tr(W)
\label{cchar3} \\
&=& \max\limits_{W \in \Si{n},\,t \in \Re} \,
\left\{ \alpha Z \bullet W + \beta t :
\tr(W)=t k, \, 0 \preceq W \preceq t I, \, 0 \le t \le 1
\right\}. \label{cchar4}
\eeqa
\end{itemize}
\end{lemma}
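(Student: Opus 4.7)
My plan is to prove parts (a), (b), (c) in sequence, each building on the previous and on Lemma~\ref{eigprop}.

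For (a), I would prove both implications directly. For sufficiency, given feasible $(Y,s)$, I would chain $\sum_{i=1}^k \lambda_i(Z) \le \sum_{i=1}^k \lambda_i(Y+sI) = \sum_{i=1}^k \lambda_i(Y) + ks \le \tr(Y) + ks \le t$, using monotonicity of the partial eigenvalue sum under the Loewner order (applied to $Y + sI \succeq Z$) together with $Y \succeq 0$. For necessity, I would exhibit an explicit feasible point attaining the bound: set $s := \lambda_k(Z)$ and $Y := (Z - sI)_+$, the spectral positive part of $Z - sI$. A short calculation using that the eigenvalues of $Y$ are $(\lambda_i(Z) - s)_+$ gives $\tr(Y) = \sum_{i=1}^k \lambda_i(Z) - ks$, and $Y \succeq Z - sI$ and $Y \succeq 0$ hold by construction.

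For (b), part (a) yields the representation
\[
\sum_{i=1}^k \lambda_i(Z) \;=\; \min\{ks + \tr(Y) : Y \succeq Z - sI,\ Y \succeq 0\}.
\]
Eliminating $s$ (optimally $s = \lambda_{\max}(Z-Y)$ for fixed $Y \succeq 0$) and invoking \eqnok{eigprop3} yields \eqnok{cchar1}. For \eqnok{cchar2}, I would swap $\min_Y$ and $\max_W$ in \eqnok{cchar1}: the inner minimization $\min_{Y \succeq 0} Y \bullet (I - kW)$ equals $0$ when $kW \preceq I$ and is $-\infty$ otherwise, so after the substitution $W \leftarrow kW$ the formula \eqnok{cchar2} emerges. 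The minimax exchange is the main technical point and can be handled either by Sion's theorem (the feasible sets are convex and $\simequal{1}{n}$ is compact) or by recognizing the pair as a primal/dual SDP pair for which strong duality holds.

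For (c), I would use the device $[a]^+ = \max_{t \in [0,1]} ta$ as in the proof of Lemma~\ref{eigprop}. Applied to \eqnok{cchar2} (with $\alpha Z$ in place of $Z$, noting that $\alpha \sum_{i=1}^k \lambda_i(Z) = \sum_{i=1}^k \lambda_i(\alpha Z)$ since $\alpha > 0$) and followed by the substitution $W' := tW$ (so $\tr(W') = tk$, $0 \preceq W' \preceq tI$, $0 \le t \le 1$), this gives \eqnok{cchar4} directly. For \eqnok{cchar3}, I would evaluate the inner maximum over $W \in \simless{1}{n}$ for fixed $Y \succeq 0$ via \eqnok{eigprop1}; it equals $[k\lambda_{\max}(\alpha Z - Y) + \tr(Y) + \beta]^+$. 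The derivation used in (b) (applied to $\alpha Z$) identifies $\min_{Y \succeq 0}[k\lambda_{\max}(\alpha Z - Y) + \tr(Y)]$ with $\alpha \sum_{i=1}^k \lambda_i(Z)$, and the elementary identity $[\min_Y g(Y)]^+ = \min_Y [g(Y)]^+$ (a two-case argument on the sign of $\min g$, using attainment of the minimum which was already exhibited in (a)) then closes the loop.

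The main obstacles I anticipate are the minimax swap in (b) and the interchange of $[\cdot]^+$ with $\min_Y$ in (c); both are routine but warrant explicit justification, and all other steps are linear-algebraic substitutions.
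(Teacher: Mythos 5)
Your proof is correct and follows the same overall architecture as the paper's: use the conic representation in (a), eliminate $s$ to get $\sum_{i\le k}\lambda_i(Z)=\min_{Y\succeq 0}\,k\lambda_{\max}(Z-Y)+\tr(Y)$, plug in the eigenvalue identities of Lemma~\ref{eigprop}, and then in (c) exploit $[\gamma]^+=\max_{t\in[0,1]}t\gamma$ together with the commutation of $\inf$ with $[\cdot]^+$. The one place you genuinely diverge is part (a): the paper disposes of it by citing pages 147--148 of Ben-Tal and Nemirovski, whereas you supply a self-contained argument (sufficiency via Weyl monotonicity of $\lambda_i$ under $\succeq$ plus $\sum_{i\le k}\lambda_i(Y)\le\tr Y$ for $Y\succeq 0$; necessity via the explicit feasible pair $s=\lambda_k(Z)$, $Y=(Z-sI)_+$, with $\tr Y=\sum_{i\le k}\lambda_i(Z)-ks$). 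That is a sound and welcome elaboration, and the constructive witness for necessity also makes transparent that the infimum in \eqnok{ttt}/\eqnok{auxx} is attained, which you invoke when commuting $\min_Y$ and $[\cdot]^+$ (though, as the paper notes, the identity $\inf_{x\in X}[x]^+=[\inf X]^+$ holds regardless of attainment). For \eqnok{cchar2} you swap $\min_Y$ and $\max_W$ in \eqnok{cchar1} and evaluate $\min_{Y\succeq 0}Y\bullet(I-kW)$ by hand, justified by Sion's theorem or by recognizing the LP/SDP dual pair; the paper instead applies strong SDP duality directly to the primal \eqnok{ttt} — these are interchangeable, and you correctly note both options. Your derivation of \eqnok{cchar3} runs from right to left where the paper runs left to right (and the paper uses the shift $Z\mapsto\alpha Z+\tfrac{\beta}{k}I$ to absorb $\beta$, while you carry $\beta$ along separately), but the two arguments use the same ingredients and are logically equivalent. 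No gaps.
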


\begin{proof}
a) This statement is proved on pages 147-148 of Ben-Tal and Nemirovski
\cite{BenNem01}.

b) Statement (a) clearly implies that
\beq \label{ttt}
\sum\limits_{i=1}^k \lambda_i(Z) =
\min\limits_{s \in \Re, Y \in \Si{n}}
\{ k s + \tr(Y): \ Y + sI \succeq Z, \, Y \succeq 0 \}.
\eeq
Noting that the condition $Y + sI \succeq Z$ is equivalent to
$s \ge \lambda_{\max}(Z - Y)$, we can eliminate the variable
$s$ from the above min problem to conclude that
\beq \label{auxx}
\sum\limits_{i=1}^k \lambda_i(Z) =
\min \, \{ k \lambda_{\max}(Z-Y) + \tr(Y) : Y \in \Sip{n} \}.
\eeq
This relation together with \eqnok{eigprop3} clearly
implies identity \eqnok{cchar1}.
Moreover, noting that the max problem \eqnok{cchar2} is the dual of
min problem \eqnok{ttt} and that they both have strictly feasible solutions,
we conclude that identity \eqnok{cchar2} holds in view of
a well-known strong duality result.

c) Using \eqnok{auxx}, the fact that $\inf_{x \in X} [x]^+ = [\inf \, X]^+$
for any $X \subseteq \Re$ and \eqnok{eigprop1}, we obtain
\beqas
\left[ \alpha \sum\limits_{i=1}^k \lambda_i(Z) + \beta \right]^+
&=& \left[ \, \sum\limits_{i=1}^k \lambda_i \left(\alpha Z +
\frac\beta k \, I \right) \, \right]^+  \\ 
&=& \left[ \min\limits_{Y \in \Sip{n}} \, k \lambda_{\max}
\left(\alpha Z + \frac\beta k \, I - Y \right) + \tr(Y) \right]^+  \\ 
&=& \min\limits_{Y \in \Sip{n}} \, \left[ k \lambda_{\max}
\left(\alpha Z + \frac\beta k \, I - Y \right) + \tr(Y) \right]^+ \\
&=& \min\limits_{Y \in \Sip{n}} \, \max\limits_{W \in \simless{1}{n}}
k \left( \alpha Z + \frac\beta k \, I - Y \right) \bullet W + \tr(Y) \tr(W),
\eeqas
from which \eqnok{cchar3} immediately follows. Moreover, using
\eqnok{cchar2}, the fact that $[\gamma]^+=\max_{t \in [0,1]} t \gamma$
for every $\gamma \in \Re$ and performing the change of variable
$Y=t \tilde Y$ in the last equality below, we obtain
\beqas
\left[ \alpha \sum\limits_{i=1}^k \lambda_i(Z) + \beta \right]^+
&=& \left[ \, \sum\limits_{i=1}^k \lambda_i \left(\alpha Z +
\frac\beta k \, I \right) \, \right]^+ \\
&=& \left[ \, \max\limits_{\tilde Y \in \Si{n}} \, \left\{
\left(\alpha Z + \frac\beta k \, I \right) \bullet \tilde Y
: \tr(\tilde Y)=k, \, 0 \preceq \tilde Y \preceq I \right\} \, \right]^+ \\
&=& \max\limits_{\tilde Y \in \Si{n},\,t \in \Re} \,
\left\{ t \left(\alpha Z + \frac\beta k \, I \right) \bullet \tilde Y
: \tr(\tilde Y)=k, \, 0 \preceq \tilde Y \preceq I, \, 0 \le t \le 1 \right\} \\
&=& \max\limits_{Y \in \Si{n},\,t \in \Re} \, \left\{ \left(\alpha Z + \frac\beta k \, I \right)
\bullet Y : \tr(Y)=t k, \, 0 \preceq Y \preceq t I, \, 0 \le t \le 1
\right\},
\eeqas
i.e., \eqnok{cchar4} holds.
\end{proof}

\vgap

\vgap

\begin{lemma} \label{sum-sings}
Let $X \in \Re^{p \times q}$ be given.
Then, the following statements hold: 
\begin{itemize}
\item[a)]
the $p+q$ eigenvalues of the symmetric matrix $\cG(X)$ defined in \eqnok{Gx},
arranged in nonascending order, are
\[
\sigma_1(X), \cdots, \sigma_m(X), 0, \cdots,0,
-\sigma_m(X),\cdots,-\sigma_1(X),
\]
where $m:=\min(p,q)$;
\item[b)] For any positive integer $k \le m$, we have
\[
\sum\limits_{i=1}^k \sigma_i(X) = \sum\limits_{i=1}^k \lambda_i(\cG(X)).
\]
\end{itemize}
\end{lemma}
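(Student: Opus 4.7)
My plan is to reduce everything to the singular value decomposition and then read off the eigenvalues of a block matrix whose structure has been made transparent. Write $X = U \Sigma V^T$, where $U \in \Re^{p \times p}$ and $V \in \Re^{q \times q}$ are orthogonal and $\Sigma \in \Re^{p \times q}$ is the rectangular ``diagonal'' matrix whose $(i,i)$-entry is $\sigma_i(X)$ for $i = 1,\ldots,m$ and whose remaining entries are zero. Plugging this into the definition \eqnok{Gx}, I would factor
\[
\cG(X) \;=\; \begin{pmatrix} V & 0 \\ 0 & U \end{pmatrix}
\begin{pmatrix} 0 & \Sigma^T \\ \Sigma & 0 \end{pmatrix}
\begin{pmatrix} V^T & 0 \\ 0 & U^T \end{pmatrix},
\]
which exhibits $\cG(X)$ as orthogonally similar to $\cG(\Sigma)$, so the two matrices share the same multiset of eigenvalues. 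This reduces part (a) to computing the eigenvalues of $\cG(\Sigma)$.

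Next I would exploit the sparsity of $\cG(\Sigma)$: after a suitable permutation of the coordinates of $\Re^{p+q}$ (pairing the $i$-th coordinate of the first block with the $i$-th coordinate of the second block, for $i = 1, \ldots, m$, and leaving the remaining $|p-q|$ coordinates alone), the matrix $\cG(\Sigma)$ becomes block diagonal with $m$ blocks of the form $\bigl(\begin{smallmatrix} 0 & \sigma_i(X) \\ \sigma_i(X) & 0 \end{smallmatrix}\bigr)$ plus a zero block of size $|p-q|$. Each $2 \times 2$ block has eigenvalues $\pm \sigma_i(X)$, and the zero block contributes $|p-q| = p + q - 2m$ zero eigenvalues. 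Collecting and sorting in nonascending order yields exactly the list asserted in (a). Part (b) is then an immediate consequence: since $k \le m$ and all singular values are nonnegative, the top $k$ entries of the sorted eigenvalue list of $\cG(X)$ are precisely $\sigma_1(X), \ldots, \sigma_k(X)$, so summing gives the claimed identity.

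I do not expect any serious obstacle; the only care needed is bookkeeping when $p \ne q$, to account for the $|p-q|$ structural zero eigenvalues introduced by the rectangular shape of $\Sigma$. This is handled by the permutation argument above, which cleanly separates the nonzero $2 \times 2$ blocks from the purely zero part.
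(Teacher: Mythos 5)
Your proof is correct, and it is the standard argument: conjugate $\cG(X)$ by the block-orthogonal matrix built from the SVD factors to reduce to $\cG(\Sigma)$, then permute coordinates so $\cG(\Sigma)$ splits into $m$ copies of $\bigl(\begin{smallmatrix} 0 & \sigma_i \\ \sigma_i & 0 \end{smallmatrix}\bigr)$ together with a $(p+q-2m)$-dimensional zero block, read off $\pm\sigma_i$ and the zeros, and note that nonnegativity of singular values gives the sorted list and hence (b) for $k \le m$. The bookkeeping for $p \ne q$ is handled correctly. The only difference from the paper is that the paper does not actually prove part (a); it simply cites page 153 of Ben-Tal and Nemirovski \cite{BenNem01} and observes that (b) follows immediately. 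Your write-up supplies the argument that citation encapsulates, so it is more self-contained than the paper's treatment while being substantively the same proof the reference gives.
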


\begin{proof}
Statement (a) is proved on page 153 of \cite{BenNem01} and statement (b)
is an immediate consequence of (a).
\end{proof}

\vgap

The following result about the sum of the $k$
largest singular values of a matrix follows immediately from
Lemmas \ref{sumeig-prop} and \ref{sum-sings}.

\begin{proposition} \label{singv-sadd}
Let $X \in \Re^{p \times q}$ and integer $1 \le k \le \min\{p,q\}$ be given and set
$n:= p+q$. Then:
\begin{itemize}
\item[a)]
For $t\in \Re$, we have
\[
\sum\limits_{i=1}^k \sigma_i(X) \le t  \ \Leftrightarrow \
\left\{ \ba{lcl}
t- ks - \tr(Y) & \ge & 0,  \\
Y - \cG(X) + sI & \succeq & 0, \\
Y & \succeq & 0,
\ea \right.
\]
for some $Y \in \Si{n}$ and $s\in\Re$;
\item[b)] The following identities hold:
\beqa
\sum\limits_{i=1}^k \sigma_i(X)
&=& \min\limits_{Y \in \Sip{n}} \max_{W \in \simequal{1}{n}}
k (\cG(X) - Y) \bullet W + \tr(Y) \label{char1} \\
&=& \max\limits_{W \in \Si{n}}
\{ \cG(X) \bullet W : \tr(W)=k, \, 0 \preceq W \preceq I\}. \label{char2}
\eeqa
\item[c)]
For every scalar $\alpha>0$ and $\beta \in \Re$,
the following identities hold:
\beqa
\lefteqn{ \left[ \alpha \sum\limits_{i=1}^k \sigma_i(X) + \beta \right]^+
\ = \ \min\limits_{Y \in \Sip{n}} \max\limits_{W \in \simless{1}{n}}
k (\alpha \cG(X) - Y) \bullet W + [ \beta + \tr(Y)] \, \tr(W) }
\label{char3} \\
&=& \max\limits_{W \in \Si{n},\,t \in \Re}
\, \left\{ \alpha \cG(X) \bullet W + \beta t :
\tr(W)=t k, \, 0 \preceq W \preceq t I, \, 0 \le t \le 1
\right\}. \label{char4}
\eeqa
\end{itemize}
\end{proposition}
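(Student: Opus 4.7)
The plan is to derive Proposition \ref{singv-sadd} as an immediate corollary of Lemma \ref{sumeig-prop} by invoking Lemma \ref{sum-sings}(b). Concretely, set $n := p+q$ and $Z := \cG(X) \in \Si{n}$. Since Lemma \ref{sum-sings}(b) gives the identity
\[
\sum_{i=1}^k \sigma_i(X) \;=\; \sum_{i=1}^k \lambda_i(\cG(X)),
\]
we can freely substitute these two expressions for each other inside any of the min/max characterizations produced by Lemma \ref{sumeig-prop}.

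For part (a), I apply Lemma \ref{sumeig-prop}(a) with $Z = \cG(X)$. The inequality $\sum_{i=1}^k \lambda_i(\cG(X)) \le t$ is equivalent to the existence of $Y \in \Si{n}$ and $s \in \Re$ such that the three semidefinite constraints $t - ks - \tr(Y) \ge 0$, $Y - \cG(X) + sI \succeq 0$, and $Y \succeq 0$ hold. Replacing the left-hand side via Lemma \ref{sum-sings}(b) yields the desired characterization.

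For part (b), I apply Lemma \ref{sumeig-prop}(b) to $Z = \cG(X)$: identity \eqnok{cchar1} becomes \eqnok{char1}, and identity \eqnok{cchar2} becomes \eqnok{char2}, once we replace $\sum_{i=1}^k \lambda_i(\cG(X))$ by $\sum_{i=1}^k \sigma_i(X)$ on the left-hand sides. Analogously, for part (c), applying Lemma \ref{sumeig-prop}(c) with $Z = \cG(X)$ and scalars $\alpha > 0$, $\beta \in \Re$ immediately gives identities \eqnok{char3} and \eqnok{char4}, again using the singular-value/eigenvalue identification on the outer positive-part expression.

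There is essentially no obstacle: Proposition \ref{singv-sadd} is a direct translation of Lemma \ref{sumeig-prop} through Lemma \ref{sum-sings}(b), so the proof amounts to pointing out that $\cG(X) \in \Si{n}$ is a valid input to Lemma \ref{sumeig-prop} and that the relevant sums of the top $k$ quantities coincide. No additional technical work is needed.
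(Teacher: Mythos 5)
Your proposal is correct and is exactly the paper's approach: the paper itself states that Proposition \ref{singv-sadd} ``follows immediately from Lemmas \ref{sumeig-prop} and \ref{sum-sings},'' which is precisely the substitution $Z = \cG(X)$ together with the identity $\sum_{i=1}^k \sigma_i(X) = \sum_{i=1}^k \lambda_i(\cG(X))$ that you carry out.
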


%

\section{Problem reformulations}
\label{reform}

This section consists of three subsections. The first subsection shows that
the restricted least squares problem \eqnok{multi-reg4} can be reduced to 
one which does not depend on the (usually large) number of rows of the matrices
$A$ and/or $B$. In the second and third subsections, we provide cone programming 
and smooth saddle point reformulations for \eqnok{multi-reg4}, respectively. 


\subsection{Problem simplification}
\label{reduction}

Observe that the number of rows of the data matrices $A$ and $B$ which
appear in \eqnok{multi-reg4} is equal to the number of observations $l$, 
which is usually quite large in many applications.  However, the size 
of the decision variable $U$ in \eqnok{multi-reg4} does not depend on $l$.
In this subsection we show how problem \eqnok{multi-reg4} can be reduced 
to similar types of problems in which the new matrix $A$ is a $p \times p$ 
diagonal matrix and hence to problems which do not depend on $l$. Clearly, 
from a computational point of view, the resulting formulations
need less storage space and can be more efficiently solved.

Since in most applications, the matrix $A$ has full column
rank, we assume that this property holds throughout the paper.
Thus, there exists an orthonormal matrix 
$Q\in \Re^{p\times p}$ and a positive diagonal matrix $\Lambda\in 
\Re^{p \times p}$ such that $A^TA=Q\Lambda^2 Q^T$. Letting
\beq \label{transform}
X := Q^TU, \ \ \ \K := \Lambda^{-1}Q^TA^TB,
\eeq 
we have
\beqas  
\|B-AU\|^2_{\F} - \|B\|_F^2 &=& \|AU\|_F^2 - 2\, (AU) \bullet B
\ = \ \tr(U^TA^TAU) - 2 \, \tr(U^TA^TB) \\
&=& \tr(U^TQ\Lambda^2Q^TU) - 2\, \tr(U^TQ\Lambda\K) \\
&=& \|\Lambda X\|_F^2 - 2 (\Lambda X) \bullet \K
\ = \ \|\Lambda X - \K \|_F^2 - \|\K\|_F^2.
\eeqas
Noting that the singular values of $X=Q^TU$ and $U$ are identical,
we immediately see from the above identity that \eqnok{multi-reg4}
is equivalent to
\beq \label{constr} 
\ba{ll}
\min\limits_{X} & \frac12\|\Lambda X - \K \|^2_{\F} \\
\mbox{s.t.}   & \sum\limits_{i=1}^m \sigma_i(X) \le M, 
\ea 
\eeq
where $\Lambda$ and $\K$ are defined in \eqnok{transform}. 

In view of Theorem \ref{lagr-eps}, we observe that for any 
$\lambda \ge 0$ and $\epsilon \ge 0$, any $\eps$-optimal solution 
$X_\eps$ of the following Lagrangian relaxation problem  
\beq \label{unconstr} 
\min\limits_X \, \frac12\|\Lambda X - \K \|^2_{\F} +
\lambda \sum\limits_{i=1}^m \sigma_i(X).
\eeq
is an $\eps$-optimal solution of problem \eqnok{constr} with 
$M=\sum_{i=1}^m\sigma_i(X_\eps)$. 
In practice, we often need to solve problem \eqnok{constr} for a 
sequence of $M$ values. Hence, one way to solve such problems is to solve 
problem \eqnok{unconstr} for a sequence of $\lambda$ values.

We will later present convex optimization methods 
for approximately solving the formulations \eqnok{constr} and \eqnok{unconstr}, 
and hence, as a by-product, formulation \eqnok{multi-reg4}.

Before ending this subsection, we provide bounds on the
optimal solutions of problems \eqnok{constr} and \eqnok{unconstr}.


\begin{lemma} \label{constr-boundX}
For every $M>0$, problem \eqnok{constr} has a unique optimal solution
$X_M^*$. Moreover,
\beq \label{trX}
\|X_M^*\|_F \leq \tilde r_x := \min\left\{\frac{2\|\Lambda \K\|_F}
{\lambda^2_{\min}(\Lambda)},M\right\}.
\eeq
\end{lemma}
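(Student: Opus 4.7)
The plan is to separate the statement into three claims: existence of an optimizer, uniqueness, and the two bounds that combine to give $\tilde r_x$.

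For existence and uniqueness, I would first note that the feasible set $\{X : \sum_i\sigma_i(X)\le M\}$ is convex (since $\sum_i\sigma_i(\cdot)$ is a norm), closed, and bounded (since $\|X\|_F\le\|X\|_1=\sum_i\sigma_i(X)\le M$). Continuity of the objective then forces a minimizer to exist. For uniqueness, the standing assumption that $A$ has full column rank gives $\Lambda\succ 0$, so the linear map $X\mapsto \Lambda X$ is injective and the objective $X\mapsto \tfrac12\|\Lambda X-\K\|_F^2$ is strictly convex in $X$; combined with convexity of the feasible set this yields a unique optimizer $X_M^*$.

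For the bound, I would handle the two terms inside the $\min$ separately. The bound $\|X_M^*\|_F\le M$ follows from $\|X\|_F=\sqrt{\sum_i\sigma_i(X)^2}\le \sum_i\sigma_i(X)\le M$, applied at $X=X_M^*$. For the other term, I would test optimality against the feasible point $X=0$:
\[
\tfrac12\|\Lambda X_M^*-\K\|_F^2 \ \le\ \tfrac12\|\K\|_F^2,
\]
which, after expansion and cancellation of $\tfrac12\|\K\|_F^2$, yields
\[
\tfrac12\|\Lambda X_M^*\|_F^2 \ \le\ (\Lambda X_M^*)\bullet\K \ =\ X_M^*\bullet(\Lambda \K).
\]
I would then apply Cauchy–Schwarz on the right-hand side to get $X_M^*\bullet(\Lambda\K)\le\|X_M^*\|_F\,\|\Lambda\K\|_F$, and on the left use $\|\Lambda X_M^*\|_F^2\ge \lambda_{\min}^2(\Lambda)\|X_M^*\|_F^2$, which holds because $\Lambda$ is diagonal and positive. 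Dividing through by $\|X_M^*\|_F$ (trivial if it is zero) gives the second bound.

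There is no real obstacle here; the only point requiring care is verifying strict convexity of the composite quadratic in $X$ under the mere assumption that $\Lambda\succ 0$, which is immediate from injectivity of $X\mapsto\Lambda X$. Everything else is a direct application of the optimality inequality against $X=0$ together with the elementary norm inequalities above.
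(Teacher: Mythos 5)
Your proof is correct and follows essentially the same argument as the paper: test optimality against the feasible point $X=0$, combine with Cauchy--Schwarz and $\|\Lambda X\|_F \ge \lambda_{\min}(\Lambda)\|X\|_F$ to get the first bound, and use $\|X\|_F\le\sum_i\sigma_i(X)\le M$ for the second. The only cosmetic difference is that you split existence (compactness) and uniqueness (strict convexity), while the paper cites strong convexity of the quadratic objective to obtain both at once.
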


\begin{proof}
Using the fact that $\Lambda $ is a $p \times p$ positive diagonal matrix, 
it is easy to see that the objective function of \eqnok{constr} is a 
(quadratic) strongly convex function, from which we conclude that \eqnok{constr} 
has a unique optimal solution $X_M^*$. Since $\|\K\|^2_F/2$ is the value of 
the objective function of \eqnok{constr} at $X=0$, we have
$\|\Lambda X_M^*-\K\|^2_F/2 \leq \|\K\|_F^2/2$, or equivalently
$\|\Lambda X_M^*\|_F^2 \leq 2 (\Lambda \K) \bu X_M^*$. Hence,
we have
\[
\left( \lambda_{\min}(\Lambda ) \right)^2 \, \|X_M^*\|_F^2 \le
\|\Lambda X_M^*\|_F^2 \leq 2 (\Lambda \K) \bu X_M^* \leq 2 \|X_M^*\|_F
\, \|\Lambda \K\|_F,
\]
which implies that
$\|X_M^*\|_F \leq 2 \|\Lambda \K\|_F/{\lambda^2_{\min}(\Lambda )}$.
Moreover, using the fact that
$\|X\|^2_F = \sum_{i=1}^m \sigma^2_i(X)$ for any
$X\in\Re^{p\times q}$, we easily see that
\beq \label{norm-singv}
\|X\|_F \le \sum_{i=1}^m \sigma_i(X).
\eeq
Since $X_M^*$ is feasible for \eqnok{constr},
it then follows from \eqnok{norm-singv} that $\|X_M^*\|_F \leq M$.
We have thus shown that inequality \eqnok{trX} holds.
\end{proof}

\vgap

\begin{lemma} \label{boundX}
For every $\lambda>0$, problem \eqnok{unconstr} has a unique optimal solution
$X_\lambda^*$. Moreover,
\beq \label{rX}
\|X_\lambda^*\|_F \le
\sum\limits_{i=1}^m \sigma_i(X_\lambda^*)
\le r_x := \min\left\{\frac{\|\K\|^2_F}{2\lambda},
\sum\limits_{i=1}^m \sigma_i(\Lambda ^{-1}\K)\right\}.
\eeq
\end{lemma}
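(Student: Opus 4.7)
The plan is to follow exactly the pattern used in the proof of Lemma \ref{constr-boundX}: first establish strong convexity of the objective to get uniqueness, then exploit two carefully chosen comparison points to bound the singular-value sum of the optimizer.

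For uniqueness, I would observe that since $\Lambda$ is a $p\times p$ positive diagonal matrix, the quadratic $\frac{1}{2}\|\Lambda X - \K\|_F^2$ is strongly convex in $X$, and the trace-norm term $\lambda \sum_{i=1}^m \sigma_i(X)$ is convex (it is a norm on $\Re^{p\times q}$). Their sum is therefore strongly convex, and since it is also coercive, problem \eqnok{unconstr} admits a unique minimizer $X_\lambda^*$.

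The left inequality in \eqnok{rX}, namely $\|X_\lambda^*\|_F \le \sum_{i=1}^m \sigma_i(X_\lambda^*)$, is just an instance of the general inequality \eqnok{norm-singv} already derived in the proof of Lemma \ref{constr-boundX}. For the right inequality, I would compare the optimal value against two specific feasible points. First, evaluating the objective at $X=0$ gives the value $\frac{1}{2}\|\K\|_F^2$, and optimality of $X_\lambda^*$ yields
\[
\tfrac{1}{2}\|\Lambda X_\lambda^* - \K\|_F^2 + \lambda \sum_{i=1}^m \sigma_i(X_\lambda^*) \le \tfrac{1}{2}\|\K\|_F^2.
\]
Dropping the nonnegative quadratic term and dividing by $\lambda$ produces the first candidate bound $\sum_i \sigma_i(X_\lambda^*) \le \|\K\|_F^2/(2\lambda)$. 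Second, I would evaluate at $X = \Lambda^{-1}\K$; here the quadratic term vanishes, so the objective equals $\lambda \sum_{i=1}^m \sigma_i(\Lambda^{-1}\K)$, and the same argument gives $\sum_i \sigma_i(X_\lambda^*) \le \sum_i \sigma_i(\Lambda^{-1}\K)$. Taking the minimum of the two upper bounds yields $r_x$.

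There is no real obstacle here; the only mild point to watch is that $\Lambda^{-1}$ exists (which follows from $\Lambda$ being a positive diagonal matrix, as guaranteed by the full-column-rank assumption on $A$ made in Subsection \ref{reduction}), so $\Lambda^{-1}\K$ is a legitimate feasible choice. Everything else is a direct mimic of the proof of Lemma \ref{constr-boundX}.
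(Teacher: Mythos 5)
Your proposal matches the paper's proof essentially step for step: strong convexity of the objective (quadratic part plus convex trace norm) for uniqueness, inequality \eqnok{norm-singv} for the left bound, and the two comparison points $X=0$ and $X=\Lambda^{-1}\K$ for the two terms in the minimum defining $r_x$. No discrepancies.
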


\begin{proof}
As shown in Lemma \ref{constr-boundX}, the function $X \in \Re^{p \times q} \to
\|\Lambda  X - H \|_F^2$ is a (quadratic) strongly convex function.
Since the term $\lambda \sum_{i=1}^m \sigma_i(X)$ is convex in
$X$, it follows that the objective function of \eqnok{unconstr}
is strongly convex, from which we conclude that
\eqnok{unconstr} has a unique optimal solution $X_\lambda^*$.
Since $\|\K\|^2_F/2$ is the value of the objective function of
\eqnok{unconstr} at $X=0$, we have
\beq \label{singv-ineq1}
\lambda\sum\limits_{i=1}^m \sigma_i(X_\lambda^*) \le
\frac12\|\Lambda  X_\lambda^* - \K\|^2_F +
\lambda\sum\limits_{i=1}^m \sigma_i(X_\lambda^*)
\leq \frac12\|\K\|^2_F.
\eeq
Also, considering the objective function of \eqnok{unconstr} at
$X=\Lambda ^{-1}H$, we conclude that
\beq \label{singv-ineq2}
\lambda\sum\limits_{i=1}^m \sigma_i(X_\lambda^*) \le
\frac12\| \Lambda  X_\lambda^* - \K\|^2_F + \lambda\sum\limits_{i=1}^m \sigma_i(X_\lambda^*)
\leq \lambda \sum\limits_{i=1}^m
\sigma_i(\Lambda ^{-1}\K).
\eeq
Now, \eqnok{rX} follows immediately from \eqnok{norm-singv},
\eqnok{singv-ineq1} and \eqnok{singv-ineq2}.
\end{proof}

\vgap


\subsection{Cone programming reformulations}
\label{cone-reform}

In this subsection, we provide cone programming reformulations for 
problems \eqnok{constr} and \eqnok{unconstr}, respectively. 

\begin{proposition} \label{unconstr-conrepr}
Problem \eqnok{unconstr} can be reformulated as the following 
cone programming:
\beq \label{cone-repr1} \ba{cl}
\min\limits_{r, s, t, X, Y} & 2r + \lambda t\\
\mbox{s.t.}  &
\left(\ba{c} r+1  \\
r-1 \\
\vec(\Lambda X-\K)
\ea \right)
\in \cL^{pq +2}, \\ [20pt] 
& Y- \cG(X) +s I \succeq 0, \\ [6pt]
& ms + \tr(Y) - t \le  0, \ Y  \succeq  0, 
\ea 
\eeq
where $(r, s, t, X, Y) \in \Re \times \Re \times \Re \times
\Re^{p \times q} \times \Si{n}$ with $n:=p+q$
and $\cG(X)$ is defined in \eqnok{Gx}.
\end{proposition}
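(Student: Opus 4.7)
The plan is to verify the reformulation by epigraphing the two parts of the objective of \eqnok{unconstr} separately: the quadratic data-fit term $\frac12\|\Lambda X - H\|_F^2$ will be moved into a second-order cone constraint, and the trace-norm term $\lambda \sum_{i=1}^m \sigma_i(X)$ will be replaced via the linear matrix inequality characterization already established in Proposition \ref{singv-sadd}(a) with $k=m$.

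First I would introduce an auxiliary scalar $r$ and show that
\[
\tfrac12\|\Lambda X-\K\|_F^2 \le 2r \ \iff \ \bigl(r+1,\, r-1,\, \vec(\Lambda X - \K)\bigr)^T \in \cL^{pq+2}.
\]
This is the standard rotated second-order cone trick: the cone membership is equivalent to $(r+1)^2 \ge (r-1)^2 + \|\Lambda X - \K\|_F^2$, which simplifies to $4r \ge \|\Lambda X - \K\|_F^2$, i.e.\ $2r \ge \tfrac12\|\Lambda X - \K\|_F^2$. Since the objective of \eqnok{cone-repr1} is minimized in $r$ with coefficient $+2$, at any optimum this inequality is tight, so $2r$ can substitute for $\tfrac12\|\Lambda X-\K\|_F^2$.

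Next I would introduce a scalar $t$ and invoke Proposition \ref{singv-sadd}(a) with $k=m$ and $n=p+q$: the condition $\sum_{i=1}^m \sigma_i(X) \le t$ is equivalent to the existence of $(s,Y)\in \Re \times \Si{n}$ satisfying
\[
t - ms - \tr(Y) \ge 0, \qquad Y - \cG(X) + sI \succeq 0, \qquad Y \succeq 0,
\]
which are exactly the last three constraints of \eqnok{cone-repr1}. Because $\lambda > 0$ and $t$ appears with coefficient $\lambda$ in the objective, minimization forces $t = \sum_{i=1}^m \sigma_i(X)$ at optimum, so $\lambda t$ substitutes for $\lambda \sum_{i=1}^m \sigma_i(X)$.

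Combining the two substitutions, any feasible $(r,s,t,X,Y)$ of \eqnok{cone-repr1} yields $X$ with $\tfrac12\|\Lambda X - \K\|_F^2 + \lambda\sum_{i=1}^m\sigma_i(X) \le 2r + \lambda t$, and conversely any feasible $X$ for \eqnok{unconstr} extends to a feasible tuple with equality in the objective by choosing $r = \tfrac14\|\Lambda X - \K\|_F^2$ and $(s,Y,t)$ achieving the bound in Proposition \ref{singv-sadd}(a). Hence the two problems share the same optimal value and the same optimal $X$. There is no real obstacle here; the step that requires the most care is simply checking that the rotated SOC inequality gives exactly $\tfrac12\|\Lambda X - \K\|_F^2 \le 2r$ (matching the factor $2$ in front of $r$ in the objective), and recording that Proposition \ref{singv-sadd}(a) is being applied with the specific choice $k=m=\min\{p,q\}$ so that $\sum_{i=1}^m \sigma_i(X)$ is indeed the full trace norm of $X$.
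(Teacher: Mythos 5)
Your argument matches the paper's proof: both epigraph the two terms of the objective separately, handling the quadratic term via the rotated second-order cone identity $4r \ge \|v\|^2 \Leftrightarrow (r+1,\,r-1,\,v)^T \in \cL^{k+2}$ and the trace-norm term via the LMI characterization of $\sum_{i=1}^m \sigma_i(X) \le t$ from Proposition \ref{singv-sadd}(a). The bookkeeping on the factor of $2$ and the choice $k=m$ are exactly what the paper leaves implicit, so your writeup is, if anything, slightly more explicit than the original.
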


\begin{proof}
We first observe that \eqnok{unconstr} is equivalent to
\beq \label{lag_conic}
\ba{ll}
\min\limits_{r, X} & 2 r + \lambda t\\
\mbox{s.t.}
& \|\Lambda X - \K \|^2_{\F} \le 4r \\
& \sum\limits_{i=1}^m \sigma_i(X) - t \leq 0.
\ea 
\eeq
Using Lemma \ref{sum-sings} and the following relation
\beqas 4r \geq \|v\|^2 \Leftrightarrow 
\left(\ba{c} r+1  \\
r-1 \\
v \\ [4pt] 
\ea 
\right) \in \cL^{k+2},
\eeqas
for any $v \in \Re^k$ and $r \in \Re$, we easily see that
\eqnok{lag_conic} is equivalent to \eqnok{cone-repr1}
\end{proof}

\vgap

The following proposition can be similarly established.

\begin{proposition} \label{constr-conrepr}
Problem \eqnok{constr} can be reformulated as the following 
cone programming:
\beq \label{cone-repr2} \ba{cl}
\min\limits_{r, s, X, Y} & 2r \\
\mbox{s.t.}  & \left(\ba{c} r+1  \\
r-1 \\
\vec(\Lambda X-\K)
\ea \right)
\in \cL^{pq +2}, \\ [20pt] 
& Y- \cG(X) +s I \succeq 0, \\ [6pt]
& ms + \tr(Y) \le  M, \ \ Y  \succeq  0,
\ea 
\eeq
where $(r, s, X, Y) \in \Re \times \Re \times
\Re^{p \times q} \times \Si{n}$ with $n:=p+q$
and $\cG(X)$ is defined in \eqnok{Gx}.
\end{proposition}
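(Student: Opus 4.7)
The plan is to imitate the proof of Proposition \ref{unconstr-conrepr} almost verbatim, the only structural difference being that here the trace-norm bound $M$ is a fixed constant on the right-hand side of a constraint rather than a variable $t$ appearing in the objective.

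First I would epigraph the quadratic objective of \eqnok{constr}. Introducing a new scalar variable $r$ with the constraint $\|\Lambda X-\K\|_F^2 \le 4r$, problem \eqnok{constr} is equivalent to
\beqas
\min\limits_{r,X} \ 2r \quad \mbox{s.t.} \quad \|\Lambda X-\K\|_F^2 \le 4r, \ \ \sum_{i=1}^m \sigma_i(X) \le M,
\eeqas
since $2r=\tfrac12\|\Lambda X-\K\|_F^2$ at any optimum. Exactly as in the proof of Proposition \ref{unconstr-conrepr}, the inequality $4r\ge \|v\|^2$ for $v\in\Re^k$ is equivalent to $(r+1,\,r-1,\,v)\in\cL^{k+2}$, because $(r+1)^2\ge (r-1)^2+\|v\|^2$ unfolds to $4r\ge \|v\|^2$. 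Applying this with $v=\vec(\Lambda X-\K)\in\Re^{pq}$ converts the quadratic constraint into the second-order cone constraint appearing in \eqnok{cone-repr2}.

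Next I would dispose of the trace-norm constraint using Proposition \ref{singv-sadd}(a) with $k=m$ and $t=M$: the inequality $\sum_{i=1}^m \sigma_i(X)\le M$ holds if and only if there exist $Y\in\Si{n}$ and $s\in\Re$ with
\beqas
M-ms-\tr(Y)\ge 0, \qquad Y-\cG(X)+sI\succeq 0, \qquad Y\succeq 0.
\eeqas
Rewriting the first inequality as $ms+\tr(Y)\le M$, these are precisely the remaining constraints of \eqnok{cone-repr2}. Combining the two reformulations yields the desired equivalence.

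There is no real obstacle here; the argument is a direct transcription of the one for Proposition \ref{unconstr-conrepr}, with the observation that the only difference between \eqnok{unconstr} and \eqnok{constr} is whether the trace-norm quantity is penalized (with multiplier $\lambda$ and slack $t$ in the objective) or constrained (with bound $M$ in a constraint). The only point worth double-checking is that Proposition \ref{singv-sadd}(a) can be invoked with the constant $M$ in place of the scalar variable $t$, which is immediate since that proposition furnishes an equivalence that holds for every real number on the right-hand side.
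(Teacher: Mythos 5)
Your proof is correct and takes exactly the route the paper intends: the paper proves Proposition \ref{unconstr-conrepr} by epigraphing the quadratic into a second-order cone constraint and invoking the SDP characterization of $\sum_{i=1}^m\sigma_i(X)\le t$, then simply remarks that Proposition \ref{constr-conrepr} "can be similarly established." Your transcription — replacing the slack variable $t$ by the fixed bound $M$ and applying Proposition \ref{singv-sadd}(a) — is precisely that similar argument, so there is nothing to add.
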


\subsection{Smooth saddle point reformulations}
\label{sadd-reform}

In this section, we provide smooth saddle point reformulations for 
problems \eqnok{constr} and \eqnok{unconstr}.

\subsubsection{Smooth saddle point reformulations for \eqnok{unconstr}}

In this subsection, we reformulate \eqnok{unconstr} into a smooth saddle 
point problem that can be suitably solved by a variant of Nesterov's smooth 
method as described in Subsections \ref{nest-smooth} and \ref{sma-implement-2}. 

We start by introducing the following notation. For every $t \ge 0$,
we let $\Omega_t$ denote the set defined as
\beq \label{Omega}
\Omega_t := \{W\in \cS^{p+q}: 0 \preceq W \preceq tI/m, \tr(W) = t \}. 
\eeq

\begin{theorem} \label{sadpt-thm2}
For some $\eps \ge 0$, assume that $X_{\epsilon}$ is an $\epsilon$-optimal
solution of the smooth saddle point problem
\beq \label{unconstr-sadpt2} 
\min\limits_{X\in \cB_F^{p \times q}(r_x)} 
\max\limits_{W \in \Omega_1} \left\{\frac12\|\Lambda X - \K\|^2_F 
+ \lambda m \cG(X) \bu W \right\},
\eeq
where $\cG(X)$ and $r_x$ are defined in \eqnok{Gx} and \eqnok{rX},
respectively.  Then, $X_{\eps}$ is an $\eps$-optimal solution of 
problem \eqnok{unconstr}. 
\end{theorem}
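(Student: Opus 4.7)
The plan is to show two things in succession: first, that the inner maximum in the saddle point formulation \eqnok{unconstr-sadpt2} reproduces the Ky Fan $m$-norm penalty appearing in \eqnok{unconstr}; and second, that restricting the outer minimization to the Frobenius ball $\cB_F^{p\times q}(r_x)$ does not change the optimal value or lose any optimal solution. Combining these facts will immediately give the desired transfer of $\epsilon$-optimality.

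For the first step, I would apply Proposition \ref{singv-sadd}(b), specifically identity \eqnok{char2}, with $k = m$. This gives
\[
\sum_{i=1}^m \sigma_i(X) = \max_{W \in \cS^{n}} \{\cG(X) \bullet W : \tr(W) = m, \, 0 \preceq W \preceq I\}.
\]
Performing the change of variable $W \mapsto m W'$ transforms the feasible set into $\{W' : \tr(W') = 1, \, 0 \preceq W' \preceq I/m\} = \Omega_1$, yielding
\[
\sum_{i=1}^m \sigma_i(X) = \max_{W \in \Omega_1} m \, \cG(X) \bullet W.
\]
Multiplying by $\lambda > 0$ and adding the quadratic term $\tfrac12 \|\Lambda X - \K\|_F^2$, which does not depend on $W$, shows that the inner maximum in \eqnok{unconstr-sadpt2} is exactly the objective of \eqnok{unconstr} evaluated at $X$.

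For the second step, I would invoke Lemma \ref{boundX}, which asserts that the unique optimal solution $X_\lambda^*$ of \eqnok{unconstr} satisfies $\|X_\lambda^*\|_F \le r_x$, i.e., $X_\lambda^* \in \cB_F^{p \times q}(r_x)$. Therefore, restricting the domain of \eqnok{unconstr} to this ball leaves the optimal value unchanged. Combining this with the first step, problems \eqnok{unconstr} and \eqnok{unconstr-sadpt2} share the same optimal value, and the two objectives agree pointwise on $\cB_F^{p \times q}(r_x)$. Consequently, if $X_\epsilon \in \cB_F^{p\times q}(r_x)$ satisfies
\[
\max_{W \in \Omega_1} \left\{\tfrac12\|\Lambda X_\epsilon - \K\|_F^2 + \lambda m \cG(X_\epsilon) \bullet W \right\} \le \mathrm{opt}\eqnok{unconstr-sadpt2} + \epsilon,
\]
then the left side equals $\tfrac12\|\Lambda X_\epsilon - \K\|_F^2 + \lambda \sum_{i=1}^m \sigma_i(X_\epsilon)$, while the right side equals $\mathrm{opt}\eqnok{unconstr} + \epsilon$, giving the desired $\epsilon$-optimality for \eqnok{unconstr}.

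There is no serious obstacle here; the only point that requires attention is the scaling $W \mapsto mW$, which must be applied consistently so that the constant $m$ in front of $\cG(X) \bullet W$ in \eqnok{unconstr-sadpt2} is matched by the corresponding bound $I/m$ (rather than $I$) in the definition \eqnok{Omega} of $\Omega_1$. Once the identification of the inner maximum is in place, the rest of the argument is a straightforward bookkeeping of inequalities using Lemma \ref{boundX}.
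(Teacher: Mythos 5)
Your proof is correct and follows essentially the same route as the paper's own one-line argument, which cites exactly the same ingredients: identity \eqnok{char2} with $k=m$ to recover the Ky Fan $m$-norm as the inner maximum over $\Omega_1$ (after the scaling $W \mapsto W/m$ implicit in the definition \eqnok{Omega}), and Lemma \ref{boundX} to justify that restricting the minimization to $\cB_F^{p\times q}(r_x)$ preserves the optimal value. You have simply filled in the bookkeeping that the paper leaves to the reader.
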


\begin{proof}
This result follows immediately from Lemma \ref{boundX} and
relations \eqnok{char2} with $k=m$, \eqnok{unconstr} and
\eqnok{Omega} with $t=1$.
\end{proof}

\vgap

In addition to the saddle point (min-max) reformulation \eqnok{unconstr-sadpt2},
it is also possible to develop an alternative saddle point reformulation
based on the identity \eqnok{char1}. These two reformulations
can in turn be solved by a suitable method, namely Nesterov's smooth
approximation scheme \cite{Nest05-1}, for solving these min-max type problems, which
we will not describe in this paper. In our computational experiments,
we found that, among these two reformulations, the first one is
computationally superior than the later one.
Details of the computational comparison of these two approaches
can be found in the technical report (see \cite{LuMoYu08-1}),
which this paper originated from.

A more efficient method than the ones outlined in the previous paragraph
for solving \eqnok{unconstr} is based on solving the dual
of \eqnok{unconstr-sadpt2}, namely
the problem
\beq \label{unconstr-sadpt2d}
\max\limits_{W \in \Omega_1} \min\limits_{X\in \cB_F^{p \times q}(r_x)}
\left\{\frac12\|\Lambda X - \K\|^2_F
+ \lambda m \cG(X) \bu W \right\},
\eeq
whose objective function has the desirable property that it has
Lipschitz continuous gradient (see Subsection \ref{sma-implement-2}
for specific details).
In Subsections \ref{nest-smooth} and \ref{sma-implement-2},
we describe an algorithm, 
namely, a variant of Nesterov's smooth method, for solving 
\eqnok{unconstr-sadpt2d} which, as a by-product, yields a pair of primal and 
dual nearly-optimal solutions, and hence a nearly-optimal solution of 
\eqnok{unconstr-sadpt2}. Finally, Section \ref{comp} only reports computational 
results for the approach outlined in this paragraph since it is far superior
than the other two approaches outlined in the previous paragraph.

\subsubsection{Smooth saddle point reformulations for \eqnok{constr}}

In this subsection, we will provide a smooth saddle point reformulation 
for \eqnok{constr} that can be suitably solved by a variant of Nesterov's smooth method 
as described in Subsection \ref{nest-smooth}. 

By directly applying Theorem \ref{penalty-thm} to problem \eqnok{constr}, we 
obtain the following result. 

\begin{lemma} \label{penalty-lem}
Let $m:=\min(p,q)$. Suppose that $\bX\in\Re^{p\times q}$ satisfies
$\sum\limits_{i=1}^m \sigma_i(\bX) < M$ and let
$\gamma$ be a scalar such that $\gamma \ge \bar \gamma$, where 
$\bar \gamma$ is given by  
\beq \label{gamma}
\bar \gamma = \frac{\|\Lambda \bX-\K\|^2_{\F}/2}{M -\sum\limits_{i=1}^m 
\sigma_i(\bar{X})}.
\eeq
Then, the following statements hold:
\bi
\item[a)]
The optimal values of \eqnok{constr} and the penalized problem
\beq \label{exact-penalty1}
\min\limits_{X\in \Re^{p \times q}}
 \left\{\frac12\|\Lambda X - \K \|^2_{\F} +
\gamma \left[\sum_{i=1}^m \sigma_i(X)-M\right]^+ \right\}
\eeq
coincide, and the optimal solution solution $X_M^*$ of \eqnok{constr}
is an optimal solution of \eqnok{exact-penalty1};
\item[b)]
if $\epsilon \ge 0$ and $X_{\epsilon}$ is an $\epsilon$-optimal 
solution of problem \eqnok{exact-penalty1}, then the point
$X^{\epsilon}$ defined as
\beq \label{xeps-theta}
X^{\epsilon} :=\frac{X_{\epsilon} + \theta\bar{X}}{1 + \theta}, \ \ \ \ \
\mbox{where} \ \theta :=
\frac{\left[\sum\limits_{i=1}^m \sigma_i(X_{\epsilon}) - M \right]^+}
{M -\sum\limits_{i=1}^m \sigma_i(\bX)},
\eeq
is an $\epsilon$-optimal solution of \eqnok{constr}.
\ei
\end{lemma}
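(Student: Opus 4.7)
The plan is to recognize Lemma~\ref{penalty-lem} as the specialization of a generic exact-penalty theorem (the Appendix's Theorem~\ref{penalty-thm}) to the convex program \eqnok{constr}. Setting $f(X) := \tfrac12\|\Lambda X - \K\|_F^2$ and $g(X) := \sum_{i=1}^m \sigma_i(X) - M$, problem \eqnok{constr} becomes $\min\{f(X) : g(X) \le 0\}$. Both $f$ and $g$ are convex (the latter because $\sum_i \sigma_i(\cdot)$ is a norm on $\Re^{p\times q}$), and the hypothesis on $\bX$ supplies a Slater point $g(\bX) < 0$. The threshold $\bar\gamma$ in \eqnok{gamma} is exactly the standard slope $f(\bX)/[-g(\bX)]$, with $f(\bX)$ playing the role of an upper bound for $f(\bX) - f^*$; this substitution is legitimate because $f^* \ge 0$ (the objective is a squared norm).

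For part (a), I would prove the two-sided inequality between the optimal values of \eqnok{constr} and \eqnok{exact-penalty1}. The direction $\mathrm{opt}\,\eqnok{exact-penalty1} \le \mathrm{opt}\,\eqnok{constr}$ is immediate: the unique optimizer $X_M^*$ of \eqnok{constr} (from Lemma~\ref{constr-boundX}) is feasible, so $[g(X_M^*)]^+ = 0$ and its \eqnok{exact-penalty1}-value equals $f(X_M^*)$. For the reverse direction, given any $X$ with $g(X) > 0$, I would form the convex combination $\tilde X := (X + \theta \bX)/(1+\theta)$ with $\theta := g(X)/[-g(\bX)]$. Convexity of the trace norm yields $g(\tilde X) \le 0$, so $\tilde X$ is feasible for \eqnok{constr}, while convexity of $f$ combined with $\gamma \ge \bar\gamma \ge f(\bX)/[-g(\bX)]$ yields $f(\tilde X) \le [f(X) + \theta f(\bX)]/(1+\theta) \le f(X) + \gamma\, g(X)$. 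Hence $\mathrm{opt}\,\eqnok{constr} \le f(X) + \gamma[g(X)]^+$; taking the infimum over $X$ closes the loop and also shows that $X_M^*$ itself attains the optimum of \eqnok{exact-penalty1}.

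For part (b), the same convex-combination idea does the work. Defining $X^\epsilon$ by \eqnok{xeps-theta}, convexity of the trace norm gives $\sum_i \sigma_i(X^\epsilon) \le [\sum_i \sigma_i(X_\epsilon) + \theta \sum_i \sigma_i(\bX)]/(1+\theta)$, and the particular choice of $\theta$ makes the right-hand side $\le M$, so $X^\epsilon$ is feasible for \eqnok{constr}. By convexity of $f$, $f(X^\epsilon) \le [f(X_\epsilon) + \theta f(\bX)]/(1+\theta)$. The $\epsilon$-optimality of $X_\epsilon$ in \eqnok{exact-penalty1}, combined with part (a), gives $f(X_\epsilon) + \gamma[g(X_\epsilon)]^+ \le f(X_M^*) + \epsilon$; meanwhile $\gamma \ge \bar\gamma$ together with the definition of $\theta$ yields $\gamma[g(X_\epsilon)]^+ \ge \theta f(\bX)$. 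Substituting gives $f(X^\epsilon) \le f(X_M^*) + \epsilon$, which is the claim.

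The principal obstacle I anticipate is purely bookkeeping: matching the form of \eqnok{gamma} (which has $\|\Lambda\bX - \K\|_F^2/2 = f(\bX)$ in its numerator) to the slope $[f(\bX) - f^*]/[-g(\bX)]$ from the abstract theorem requires invoking $f^* \ge 0$, and part (b) must be written carefully when $g(X_\epsilon) \le 0$ so that $\theta = 0$ and $X^\epsilon = X_\epsilon$ handles the edge case trivially. Everything else reduces to two invocations of Jensen's inequality, so no additional machinery beyond convexity of $f$ and of the trace norm is needed.
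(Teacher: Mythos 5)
Your proof is correct and matches the paper's approach: the paper's own proof is a one-line citation of Theorem~\ref{penalty-thm}, the abstract exact-penalty result in the Appendix, specialized to $f(X)=\tfrac12\|\Lambda X-\K\|_F^2$ and $g(X)=\sum_{i=1}^m\sigma_i(X)-M$, and you re-derive exactly that specialization via the Slater point $\bX$, the convex combination $(X+\theta\bX)/(1+\theta)$, and two applications of Jensen. You also correctly flag the one step the paper leaves implicit in the citation: the Lemma's $\bar\gamma$ uses $f(\bX)$ in place of $f(\bX)-f^*$, which is legitimate only because $f^*\ge 0$; this is precisely the remark the paper makes after Theorem~\ref{penalty-thm} about substituting a lower bound $f_l\le f^*$ (here $f_l=0$) into the threshold.
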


\vgap
We next provide a smooth saddle point reformulation for problem \eqnok{constr}. 

\begin{theorem} \label{exactpenal-thm2}
Let $m:=\min(p,q)$. Suppose that $\bX\in\Re^{p\times q}$ satisfies
$\sum\limits_{i=1}^m \sigma_i(\bX) < M$
and let $\gamma$ be a scalar such that $\gamma \ge \bar \gamma$, 
where $\bar\gamma$ is defined in \eqnok{gamma}. For some $\epsilon \geq 0$,
assume that $X_{\epsilon}$ is an $\epsilon$-optimal solution
of the problem
\beq \label{constr-sadpt2} 
\min\limits_{X\in \cB_F^{p \times q}(\trx)} 
\max\limits_{(t,W)\in \tOmega}\left\{\frac12\|\Lambda X - \K \|^2_{\F} + 
\gamma (m \cG(X) \bu W -Mt) \right\},
\eeq
where $\tilde r_x$ is defined in \eqnok{trX} and $\tOmega$ is defined as
\beq \label{tOmega}
\tOmega := \{(t,W)\in \Re \times  \cS^{p+q}: W \in \Omega_t,
\, 0 \le t \le 1 \}.
\eeq
Let $X^{\epsilon}$ be defined in \eqnok{xeps-theta}. Then, $X^{\epsilon}$ 
is an $\epsilon$-optimal solution of \eqnok{constr}.
\end{theorem}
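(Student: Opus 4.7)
The plan is to reduce the saddle point problem \eqnok{constr-sadpt2} to the penalized problem \eqnok{exact-penalty1} restricted to the ball $\cB_F^{p \times q}(\trx)$, and then invoke Lemma \ref{penalty-lem}(b). The two technical ingredients will be Proposition \ref{singv-sadd}(c), which converts the inner maximization into a positive part of a sum of singular values, and Lemma \ref{constr-boundX}, which guarantees that the unique optimum $X_M^*$ of \eqnok{constr} already lies in $\cB_F^{p \times q}(\trx)$.

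First I would evaluate the inner maximum. Applying \eqnok{char4} with $k = m$, $\alpha = \gamma$, $\beta = -\gamma M$, and then performing the change of variables $\tilde W := W/m$, the constraint $\tr(W) = tk$, $0 \preceq W \preceq tI$, $0 \le t \le 1$ becomes $\tr(\tilde W) = t$, $0 \preceq \tilde W \preceq tI/m$, $0 \le t \le 1$, which is precisely $(t, \tilde W) \in \tOmega$ by \eqnok{Omega} and \eqnok{tOmega}. Correspondingly $\alpha \cG(X) \bu W = \gamma m \cG(X) \bu \tilde W$, so that
\[
\max_{(t,W)\in \tOmega}\left\{\gamma m \cG(X) \bu W - \gamma M t\right\} \;=\; \gamma\left[\sum_{i=1}^m \sigma_i(X) - M\right]^+.
\]
Consequently the saddle point problem \eqnok{constr-sadpt2} is equivalent to
\[
\min_{X \in \cB_F^{p \times q}(\trx)}\left\{\tfrac12 \|\Lambda X - \K\|_F^2 + \gamma\left[\sum_{i=1}^m \sigma_i(X) - M\right]^+\right\},
\]
i.e., to \eqnok{exact-penalty1} restricted to the ball $\cB_F^{p \times q}(\trx)$.

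Next I would argue that this restriction is harmless. By Lemma \ref{penalty-lem}(a), the unique optimum $X_M^*$ of \eqnok{constr} is also an optimum of the unrestricted penalized problem \eqnok{exact-penalty1}; by Lemma \ref{constr-boundX}, $\|X_M^*\|_F \le \trx$, so $X_M^*$ is feasible for the restricted problem. Hence the restricted and unrestricted penalized problems share the same optimal value, and any $\eps$-optimal solution $X_\eps$ of the restricted version (in particular, of \eqnok{constr-sadpt2}) is automatically an $\eps$-optimal solution of \eqnok{exact-penalty1}. An application of Lemma \ref{penalty-lem}(b) then yields that $X^\eps$, as defined in \eqnok{xeps-theta}, is an $\eps$-optimal solution of \eqnok{constr}, completing the argument.

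The main obstacle I anticipate is the bookkeeping in the change of variables $W \mapsto W/m$ needed to reconcile the parameterization of $\tOmega$ (via the ``$/m$'' factor in the upper bound on $W$) with the form in which \eqnok{char4} is stated; once this rescaling is correctly tracked, the rest of the argument is a clean chaining of Proposition \ref{singv-sadd}(c), Lemma \ref{constr-boundX}, and Lemma \ref{penalty-lem}.
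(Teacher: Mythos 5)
Your proof is correct and follows essentially the same route as the paper's: evaluate the inner maximum via \eqnok{char4} to identify \eqnok{constr-sadpt2} with the penalized problem \eqnok{exact-penalty1} (restricted to the ball), note via Lemma \ref{penalty-lem}(a) and Lemma \ref{constr-boundX} that $X_M^*$ is optimal for the penalized problem and lies in $\cB_F^{p \times q}(\trx)$ so the restriction is immaterial, and close with Lemma \ref{penalty-lem}(b). The only cosmetic difference is that you take $\alpha=\gamma$, $\beta=-\gamma M$ rather than the paper's $\alpha=1$, $\beta=-M$ followed by multiplication by $\gamma$, and you spell out the rescaling $W\mapsto W/m$ that the paper leaves implicit.
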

 
\begin{proof}
Let $X_M^*$ denote the unique optimal solution of \eqnok{constr}.
Then, $X_M^*$ is also an optimal solution of  \eqnok{exact-penalty1}
in view of Lemma \ref{penalty-lem}(a), and
$X_M^*$ satisfies $X_M^* \in \cB_F^{p \times q}(\trx)$
due to Lemma \ref{constr-boundX}.  
Also, relation \eqnok{char4} with $\alpha=1$, $\beta=-M$ and $k=m$
implies that the objective functions of problems 
\eqnok{exact-penalty1} and \eqnok{constr-sadpt2} are equal to each other 
over the whole space $\Re^{p \times q}$. The above observations then
imply that $X_M^*$ is also an optimal solution of \eqnok{constr-sadpt2}
and that problems \eqnok{exact-penalty1} and \eqnok{constr-sadpt2} have
the same optimal value. 
Since by assumption $X_{\epsilon}$ is an $\epsilon$-optimal
solution of \eqnok{constr-sadpt2}, it follows that $X_{\epsilon}$
is also an $\epsilon$-optimal solution of problem \eqnok{exact-penalty1}.
The latter conclusion together with Lemma \ref{penalty-lem}(b)
immediately yields the conclusion of the theorem.
\end{proof}

\vgap

The saddle point (min-max) reformulation \eqnok{constr-sadpt2} can be solved by 
a suitable method, namely, Nesterov's smooth approximation scheme \cite{Nest05-1}, 
which we will not describe in this paper. A more efficient method for solving 
\eqnok{constr} is based on solving the dual
of \eqnok{constr-sadpt2}, namely
the problem
\beq \label{constr-sadpt2d}
\max\limits_{(t,W)\in \tOmega}\min\limits_{X\in \cB_F^{p \times q}(\trx)}
\left\{\frac12\|\Lambda X - \K \|^2_{\F} + \gamma (m\cG(X) \bu W -Mt) \right\},
\eeq
whose objective function has the desirable property that it has Lipschitz 
continuous gradient (see Subsection \ref{sma-implement-3} for specific details). 
In Subsections \ref{nest-smooth} and \ref{sma-implement-3}, we describe an 
algorithm, namely a variant of Nesterov's smooth method, for solving 
\eqnok{constr-sadpt2d} which, as a by-product, yields a pair of primal and 
dual nearly-optimal solutions, and hence a nearly-optimal solution of 
\eqnok{constr-sadpt2}. 

\section{Numerical methods}
\label{numerical}

In this section, we discuss numerical methods for solving problem \eqnok{unconstr}. More 
specifically, Subsection \ref{nest-smooth} reviews a variant of Nesterov's smooth method 
\cite{tseng08}, for solving a convex minimization 
problem over a relatively simple set with a smooth objective function that 
has Lipschitz continuous gradient. In Subsections \ref{sma-implement-2} and 
\ref{sma-implement-3}, we present the implementation details of the variant of 
Nesterov's smooth methd for solving the reformulations \eqnok{unconstr-sadpt2d} 
of problem \eqnok{unconstr} and \eqnok{constr-sadpt2d} of problem \eqnok{constr}, 
respectively.


The implementation details of the other formulations discussed in the paper, more 
specifically, the reformulations \eqnok{unconstr-sadpt2} of problem \eqnok{unconstr} and 
\eqnok{constr-sadpt2} of problem \eqnok{constr} will not be presented here. 
The implementation details of some other reformulations of problems \eqnok{unconstr} 
and \eqnok{constr} can be found in Subsection $4.2$ of \cite{LuMoYu08-1}.

%

\subsection{Review of a variant of Nesterov's smooth method}
\label{nest-smooth}

In this subsection, we review a variant of Nesterov's smooth 
first-order method \cite{Nest83-1,Nest05-1} that is proposed 
by Tseng \cite{tseng08} for solving a class of smooth convex 
programming (CP) problems.

Let $\cU$ and $\cV$ be normed vector spaces with the respective
norms denoted by $\|\cdot\|_\cU$ and $\|\cdot\|_\cV$.
We will discuss a variant of Nesterov's smooth first-order 
method for solving the class of CP problems
\beq \label{convex-opt}
\min\limits_{u\in U} f(u)
\eeq  
where the objective function $f:U \to \Re$ has the form
\beq \label{fu}
f(u) := \max\limits_{v\in V}\phi(u,v), \ \ \forall u \in U,
\eeq
for some continuous function $\phi:U \times V \to \Re$ and
nonempty compact convex subsets $U \subseteq \cU$ and $V \subseteq \cV$.
We make the following assumptions regarding the function $\phi$:

\vgap

{\bf B.1} for every $u \in U$, the function
$\phi(u,\cdot): V \to \Re $ is {\it strictly} concave;

{\bf B.2} for every $v \in V$, the function
$\phi(\cdot,v): U \to \Re $ is convex differentiable;

{\bf B.3} the function $f$ is 
$L$-Lipschitz-differentiable on $U$ with respect to $\|\cdot\|_\cU$ (see \eqnok{sete}).

\vgap

It is well-known that Assumptions B.1 and B.2 imply that the function $f$
is convex differentiable, and that its gradient is given by
\beq \label{gu}
\nabla f(u) = \nabla_u \phi(u,v(u)), \ \ \forall u\in U,
\eeq
where $v(u)$ denotes the unique solution of \eqnok{fu}
(see for example Proposition B.25 of \cite{Bert99}). Moreover, problem
\eqnok{convex-opt} and its dual, namely:
\beq \label{dual-prob}
\max\limits_{v \in V} \, \{ g(v):= \min\limits_{u\in U} \phi(u, v) \},
\eeq
both have optimal solutions $u^*$ and $v^*$ such that $f(u^*)=g(v^*)$.
Finally, using Assumption B.3, Lu \cite{Lu07-1} recently showed that problem 
\eqnok{convex-opt}-\eqnok{fu} and its dual problem \eqnok{dual-prob} 
can be suitably solved by Nesterov's smooth method \cite{Nest05-1}, 
simultaneously. We shall notice, however, that Nesterov's smooth method 
\cite{Nest05-1} requires solving two prox-type subproblems per iteration. 
More recently, Tseng \cite{tseng08} proposed a variant of Nesterov's smooth 
method described as follows, which needs to solve one prox subproblem per 
iteration only.   

Let $\dU:U \to \Re$ be a differentiable strongly convex function
with modulus $\usigma >0$ with respect to $\|\cdot\|_\cU$, i.e.,
\beq \label{strong_h}
\dU(u) \ge \dU(\tu) + \langle \nabla\dU(\tu), u-\tu \rangle +
\frac{\usigma}{2} \|u-\tu\|_\cU^2, \ \ \forall u,\tu \in \setU.
\eeq
Let $u_0$ be defined as 
\beq \label{u0}
u_0 = \arg\min\{\dU(u): \ u \in U\}.
\eeq
By subtracting the constant $\dU(u_0)$ from the function $\dU(\cdot)$,
we may assume without any loss of generality that $\dU(u_0)=0$.
The Bregman distance $d_\uh : U \times U \to \Re$
associated with $\uh$ is defined as
\beq \label{Bregdist}
d_\uh(u;\tu)=\uh(u)-l_\uh(u;\tu), \ \ \forall u,\tu \in \setU,
\eeq
where $l_\uh: \cU \times U \to \Re$ is the ``linear approximation''
of $\uh$ defined as
\[
l_\uh(u;\tu) = \uh(\tu)+\langle \nabla p_U(\tu), u-\tu \rangle,
\ \ \forall (u,\tu) \in \cU \times U.
\]
Similarly, we can define the function $l_f(\cdot;\cdot)$ that will be 
used subsequently. 

We now describe the variant of Nesterov's smooth method proposed by Tseng 
\cite{tseng08} for solving problem \eqnok{convex-opt}-\eqnok{fu} and its 
dual problem \eqnok{dual-prob}. It uses a sequence $\{\alpha_k\}_{k\ge 0}$ 
of scalars satisfying the following condition:
\beq \label{condseq}
0 < \alpha_k \le \left(\sum_{i=0}^k \alpha_i \right)^{1/2}, \ \forall k \ge 0.
\eeq
Clearly, \eqnok{condseq} implies that $\alpha_0 \in (0,1]$.

\gap

\noindent
\begin{minipage}[h]{6.6 in}
{\bf Variant of Nesterov's smooth algorithm:} \\ [5pt]
Let $u_0\in U$ and $\{\alpha_k\}_{k\ge 0}$ satisfy 
\eqnok{u0} and \eqnok{condseq}, respectively. \\ [4pt]
Set $u^{sd}_0=u_0$, $v_0 = 0 \in \cV$, $\tau_0=1$ and $k=1$;
\begin{itemize}
\item[1)]
Compute $v(u_{k-1})$ and $\nabla f(u_{k-1})$.
\item[2)]
Compute $(u^{sd}_{k},u^{ag}_{k}) \in U \times U$ and
$v_k \in V$ as
\beqa
v_k &\equiv& (1-\tau_{k-1})v_{k-1} + \tau_{k-1}v(u_{k-1}) \nn \\
u^{ag}_{k} &\equiv& \argmin \left \{ \frac{L}{\usigma} \, d_\uh(u;u_0)
+ \sum_{i=0}^{k-1} \alpha_i \, l_f(u;u_i) : u \in \setU \right \} \label{proxsub1} \\
u^{sd}_k &\equiv& (1-\tau_{k-1}) u^{sd}_{k-1} + \tau_{k-1} u^{ag}_k. \nn
\eeqa
\item[3)]
Set $\tau_k = \alpha_k/(\sum_{i=0}^k \alpha_i)$ and
$u_k = (1-\tau_k) u^{sd}_{k} + \tau_k u^{ag}_k$.
\item[4)]
Set $k \leftarrow k+1$ and go to step 1). 
\end{itemize}
\noindent
{\bf end}
\end{minipage}
\vgap

We now state the main convergence result regarding the variant of Nesterov's
smooth algorithm for solving problem \eqnok{convex-opt}and its dual \eqnok{dual-prob}. 
Its proof is given in Corollary 3 of Tseng \cite{tseng08}.

\begin{theorem} \label{mtm-sm}
The sequence $\{(u^{sd}_k, v_k)\} \subseteq U \times V$ generated
by the variant of Nesterov's smooth algorithm satisfies
\beq \label{gap}
0 \le f(u^{sd}_k) - g(v_k) \le
\frac{L\DU}{\usigma ( \sum_{i=0}^{k-1} \alpha_i)}, \ \ \forall k \ge 1,
\eeq
where
\beq \label{D}
\DU = \max \{\dU(u): \ u \in U\}.
\eeq
\end{theorem}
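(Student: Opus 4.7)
The plan is to establish the duality gap bound through a sandwiching argument: first prove a sharp ``estimate sequence''--style inequality that upper bounds $A_k f(u^{sd}_k)$ (where $A_k := \sum_{i=0}^{k-1}\alpha_i$) by a weighted sum of the linearizations $l_f(\,\cdot\,;u_i)$ plus a Bregman term, and then lift this bound so that its right-hand side is controlled by the dual value $g(v_k)$, with the residual gap coming from $\DU/\usigma$.

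More concretely, I would first prove by induction on $k$ the invariant
\begin{equation} \label{plan-invariant}
A_k\, f(u^{sd}_k) \;\le\; \sum_{i=0}^{k-1}\alpha_i\, l_f(u;u_i) + \frac{L}{\usigma}\, d_\uh(u;u_0), \qquad \forall u \in U.
\end{equation}
The base case $k=1$ (so $\tau_0=\alpha_0=1$, $u^{sd}_1=u^{ag}_1$) follows from the $L$-Lipschitz descent lemma applied at $u_0$ together with the strong convexity of $\uh$ and the definition \eqnok{proxsub1} of $u^{ag}_1$. For the inductive step I would use the definitions $u^{sd}_k = (1-\tau_{k-1})u^{sd}_{k-1}+\tau_{k-1}u^{ag}_k$ and $u_{k-1} = (1-\tau_{k-1})u^{sd}_{k-1}+\tau_{k-1}u^{ag}_{k-1}$, combine the Lipschitz descent inequality $f(u^{sd}_k) \le l_f(u^{sd}_k;u_{k-1}) + \tfrac{L}{2}\|u^{sd}_k - u_{k-1}\|_\cU^2$ with the convexity bound $f(u^{sd}_{k-1}) \ge l_f(u^{sd}_{k-1};u_{k-1})$, and finally invoke the first-order optimality condition at $u^{ag}_k$ for the prox subproblem \eqnok{proxsub1} together with the strong convexity bound $d_\uh(u;u^{ag}_k) \ge (\usigma/2)\|u - u^{ag}_k\|_\cU^2$; condition \eqnok{condseq} is precisely what is needed to absorb the quadratic terms $\|u^{sd}_k - u_{k-1}\|^2$ and $\|u-u^{ag}_k\|^2$ into one another.

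Once \eqref{plan-invariant} is in place, the second stage passes to the dual. By Assumption B.2 and the identity $\nabla f(u_i) = \nabla_u \phi(u_i,v(u_i))$, convexity of $\phi(\cdot,v(u_i))$ yields $l_f(u;u_i) \le \phi(u,v(u_i))$. Unfolding the recursion $v_k = (1-\tau_{k-1})v_{k-1} + \tau_{k-1}v(u_{k-1})$ with $\tau_i = \alpha_i/A_{i+1}$ shows that $v_k = A_k^{-1}\sum_{i=0}^{k-1}\alpha_i v(u_i)$, so concavity of $\phi(u,\cdot)$ (Assumption B.1) gives $\sum_{i=0}^{k-1}\alpha_i\,\phi(u,v(u_i)) \le A_k\,\phi(u,v_k)$. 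Substituting into \eqref{plan-invariant} and using $d_\uh(u;u_0) \le \uh(u) \le \DU$ yields $A_k f(u^{sd}_k) \le A_k \phi(u,v_k) + L\DU/\usigma$ for all $u\in U$. Minimizing the right-hand side over $u\in U$ produces $A_k g(v_k)$, which is exactly the claimed bound; the left inequality $f(u^{sd}_k) \ge g(v_k)$ is weak duality.

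The main obstacle will be the inductive step for \eqref{plan-invariant}: one has to orchestrate the Lipschitz descent inequality, convexity of $f$ at $u^{sd}_{k-1}$, the prox optimality at $u^{ag}_k$, and the algebraic identities linking $u_{k-1}$, $u^{sd}_k$, $u^{ag}_k$ so that all second-order terms telescope. This is where condition \eqnok{condseq} on $\{\alpha_k\}$ enters in a non-obvious way --- it is precisely the requirement that makes the coefficient of the Bregman term remain $L/\usigma$ across iterations rather than growing with $k$. Everything else (the convex-concave inequalities on $\phi$ and the minimization yielding $g(v_k)$) is essentially bookkeeping once the estimate sequence is in hand.
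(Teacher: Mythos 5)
Your sketch is correct, and it follows the standard ``estimate-sequence'' (potential-function) argument that Tseng uses; note that the paper itself does not prove Theorem~\ref{mtm-sm} but simply cites Corollary~3 of Tseng \cite{tseng08}, so what you have written is a faithful reconstruction of the proof being referenced rather than an alternative route. The key ingredients you identify are exactly the right ones: writing $A_k:=\sum_{i=0}^{k-1}\alpha_i$, the invariant $A_k f(u^{sd}_k) \le \min_{u\in U}\bigl\{\sum_{i=0}^{k-1}\alpha_i\, l_f(u;u_i) + \tfrac{L}{\usigma}d_\uh(u;u_0)\bigr\}$ is maintained by combining the Lipschitz descent inequality, convexity of $f$ at $u^{sd}_{k-1}$, and $L$-strong convexity of the prox objective at its minimizer $u^{ag}_k$; the telescoping of second-order terms uses $u^{sd}_k - u_{k-1} = \tau_{k-1}(u^{ag}_k - u^{ag}_{k-1})$ and reduces precisely to requiring $A_k\tau_{k-1}^2 = \alpha_{k-1}^2/A_k \le 1$, which is condition \eqnok{condseq}. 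The passage from this primal estimate to the duality gap---linearization bound $l_f(u;u_i)\le\phi(u,v(u_i))$ from B.2 and \eqnok{gu}, the identity $v_k = A_k^{-1}\sum_{i=0}^{k-1}\alpha_i v(u_i)$ from the update and $\tau_{k-1}=\alpha_{k-1}/A_k$, concavity in $v$ from B.1, the bound $d_\uh(u;u_0)\le\uh(u)\le\DU$ via \eqnok{u0} and the normalization $\uh(u_0)=0$, and finally minimizing $\phi(\cdot,v_k)$ over $U$ to obtain $g(v_k)$---is exactly right, with the left inequality being weak duality.
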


A typical sequence $\{\alpha_k\}$ satisfying \eqnok{condseq} is the
one in which $\alpha_k=(k+1)/2$ for all $k \ge 0$. With this
choice for $\{\alpha_k\}$, we have the following specialization
of Theorem \ref{mtm-sm}.

\begin{corollary} \label{mtm-sm1}
If $\alpha_k=(k+1)/2$ for every $k \ge 0$, then the
sequence $\{(u^{sd}_k, v_k)\} \subseteq U \times V$ generated
by the variant of Nesterov's smooth algorithm satisfies
\[
0 \le f(u^{sd}_k) - g(v_k) \le
\frac{4L\DU}{\usigma k(k+1)}, \ \ \forall k \ge 1,
\]
where $\DU$ is defined in $\eqnok{D}$. 
Thus, the iteration-complexity of finding an $\epsilon$-optimal
solution to \eqnok{convex-opt} and its dual \eqnok{dual-prob}
by the variant of Nesterov's smooth algorithm does not exceed
$2[(L\DU)/(\usigma \epsilon)]^{1/2}$.
\end{corollary}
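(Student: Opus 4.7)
The plan is to derive the result as a direct specialization of Theorem \ref{mtm-sm}. The only real work is bookkeeping: check that the proposed sequence is admissible, plug it into \eqnok{gap}, simplify, and then invert the resulting bound to extract the iteration count.

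First I would verify that $\alpha_k = (k+1)/2$ satisfies the admissibility condition \eqnok{condseq}. Using the standard arithmetic-series identity one gets
\[
\sum_{i=0}^k \alpha_i \;=\; \frac{1}{2}\sum_{i=0}^k (i+1) \;=\; \frac{(k+1)(k+2)}{4},
\]
so the required inequality $\alpha_k \le \bigl(\sum_{i=0}^k \alpha_i\bigr)^{1/2}$ reduces to $(k+1)^2/4 \le (k+1)(k+2)/4$, i.e.\ $k+1 \le k+2$, which holds for all $k \ge 0$. Hence the variant of Nesterov's method is well-defined for this choice.

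Next I would substitute into the bound \eqnok{gap} of Theorem \ref{mtm-sm}, using the same formula for the partial sum with $k$ replaced by $k-1$:
\[
\sum_{i=0}^{k-1} \alpha_i \;=\; \frac{k(k+1)}{4}.
\]
Plugging this in gives
\[
0 \;\le\; f(u^{sd}_k) - g(v_k) \;\le\; \frac{L \DU}{\usigma \cdot k(k+1)/4} \;=\; \frac{4 L \DU}{\usigma\, k(k+1)},
\]
which is exactly the first claimed inequality.

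For the complexity estimate, I would force the right-hand side to be at most $\epsilon$: we need $k(k+1) \ge 4 L \DU/(\usigma \epsilon)$. Since $k(k+1) \ge k^2$, it suffices to take $k \ge 2\sqrt{L \DU/(\usigma \epsilon)}$, giving the stated bound $2\bigl[L\DU/(\usigma \epsilon)\bigr]^{1/2}$. There is no genuine obstacle here; the only thing to be slightly careful about is the index shift between $\sum_{i=0}^k$ (used to verify \eqnok{condseq}) and $\sum_{i=0}^{k-1}$ (appearing in the convergence bound), and the use of the loose but convenient estimate $k(k+1) \ge k^2$ when inverting the rate.
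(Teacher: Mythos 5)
Your proposal is correct and is exactly the intended argument: the paper presents Corollary~\ref{mtm-sm1} as an immediate specialization of Theorem~\ref{mtm-sm} without writing out the details, and your verification of condition \eqnok{condseq}, the partial-sum computation $\sum_{i=0}^{k-1}\alpha_i = k(k+1)/4$, and the inversion of the rate via $k(k+1)\ge k^2$ are precisely the bookkeeping the authors left to the reader.
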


Before ending this subsection, we state sufficient conditions for the
function $\phi$ to satisfy Assumptions B.1-B.3.
The proof of the following result can be found in Theorem 1 of
\cite{Nest05-1}.

\begin{proposition} \label{Nest}
Let a norm $\|\cdot\|_{\cal V}$ on $\cV$ be given. 
Assume that $\phi:U \times V \to \Re$ has the form
\beq \label{phiuv}
\phi(u,v) = \theta(u) + \langle u , {\cal E}v \rangle - h(v), \ \ \
\forall (u,v) \in U \times V,
\eeq
where ${\cal E}: {\cal V} \to {\cal U}^*$ is a linear map,
$\theta:U \to \Re$ is $L_\theta$-Lipschitz-differentiable in
$U$ with respect to $\|\cdot\|_{\cU}$, and
$h: V \to \Re$ is a differentiable strongly convex function with modulus
$\sigma_V>0$ with respect to $\|\cdot\|_{\cal V}$.
Then, the function $f$ defined by $\eqnok{fu}$ is
$(L_\theta+\|{\cal E}\|_{\cU,\cV}^2/\sigma_V)$-Lipschitz-differentiable in
$U$ with respect to $\|\cdot\|_{\cU}$. As a consequence,
$\phi$ satisfies Assumptions B.1-B.3 with norm $\|\cdot\|_\cU$ and
$L= L_\theta+\|{\cal E}\|_{\cU,\cV}^2/\sigma_V$.
\end{proposition}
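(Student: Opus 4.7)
The plan is to verify Assumptions B.1, B.2, B.3 for $\phi$ given by \eqnok{phiuv} and to identify the Lipschitz constant of $\nabla f$. B.1 is immediate: $\phi(u,\cdot)=\theta(u)+\langle u,\cE(\cdot)\rangle-h(\cdot)$ is an affine function of $v$ minus the strongly (hence strictly) convex function $h$, so it is strictly concave on $V$. B.2 follows because $\phi(\cdot,v)=\theta(\cdot)+\langle\cdot,\cE v\rangle-h(v)$ is the sum of the convex differentiable $\theta$ and a linear functional (plus a constant), hence convex and differentiable in $u$.

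The substantive work is B.3. Using the gradient formula \eqnok{gu}, one has $\nabla f(u)=\nabla\theta(u)+\cE\,v(u)$, where $v(u)$ is the unique maximizer in \eqnok{fu}. The triangle inequality in $\|\cdot\|_{\cU}^{*}$, combined with the operator-norm bound $\|\cE w\|_{\cU}^{*}\le\|\cE\|_{\cU,\cV}\,\|w\|_{\cV}$, reduces the claim to the Lipschitz estimate
\[
\|v(u)-v(\tilde u)\|_{\cV}\le\frac{\|\cE\|_{\cU,\cV}}{\sigma_V}\,\|u-\tilde u\|_{\cU}.
\]
To obtain this, I would exploit the $\sigma_V$-strong concavity of $\psi_u(v):=\langle u,\cE v\rangle-h(v)$ in $v$. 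Since $v(u)$ maximizes $\psi_u$ over the convex set $V$, the first-order optimality condition combined with strong concavity gives
\[
\psi_u(v(u))-\psi_u(v(\tilde u))\ge\frac{\sigma_V}{2}\|v(u)-v(\tilde u)\|_{\cV}^{2},
\]
and symmetrically with $u$ and $\tilde u$ interchanged. Adding these two inequalities causes the $h(\cdot)$ contributions to cancel and the linear-in-$v$ terms to collapse, yielding
\[
\sigma_V\|v(u)-v(\tilde u)\|_{\cV}^{2}\le\langle u-\tilde u,\cE(v(u)-v(\tilde u))\rangle\le\|u-\tilde u\|_{\cU}\,\|\cE\|_{\cU,\cV}\,\|v(u)-v(\tilde u)\|_{\cV},
\]
where the second inequality pairs the dual norm $\|\cdot\|_{\cU}^{*}$ with $\|\cdot\|_{\cU}$ and then applies the operator-norm bound. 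Cancelling one factor of $\|v(u)-v(\tilde u)\|_{\cV}$ gives the Lipschitz bound on $v(\cdot)$; substituting into the earlier triangle-inequality estimate for $\nabla f$ produces the asserted constant $L_\theta+\|\cE\|_{\cU,\cV}^{2}/\sigma_V$.

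The main obstacle is the two-sided strong-concavity inequality for constrained maximizers; it is this step that converts strong convexity of $h$ into a quantitative Lipschitz bound on the argmax map $v(\cdot)$. A minor bookkeeping point is that $\cE$ here maps $\cV\to\cU^{*}$ rather than $\cU\to\cV^{*}$ as in the definition \eqnok{normC1}: the symbol $\|\cE\|_{\cU,\cV}$ should be interpreted via the norm of the adjoint operator, and the two quantities agree by duality.
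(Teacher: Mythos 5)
Your proof is correct. The paper itself does not give a proof for this proposition --- it simply refers the reader to Theorem~1 of Nesterov \cite{Nest05-1} --- and what you have written is essentially a faithful reconstruction of that standard argument: verify B.1 and B.2 directly from the structure of $\phi$, use the Danskin-type gradient formula \eqnok{gu} to write $\nabla f(u)=\nabla\theta(u)+\cE v(u)$, and then obtain the Lipschitz bound on the argmax map $v(\cdot)$ by the two-sided application of quadratic growth at constrained maximizers of a $\sigma_V$-strongly-concave function, followed by the H\"older and operator-norm estimates. Your closing remark that $\cE$ here maps $\cV\to\cU^*$ while \eqnok{normC1} is stated for operators $\cU\to\cV^*$, so that $\|\cE\|_{\cU,\cV}$ must be read as the (equal) adjoint norm $\max\{\|\cE v\|_\cU^*:\|v\|_\cV\le 1\}$, is exactly right and is in fact the convention the paper uses implicitly when it later evaluates this norm in \eqnok{cE-2}.
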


We will see in Section \ref{numerical} that all
saddle-point reformulations \eqnok{convex-opt}-\eqnok{fu}
of problems \eqnok{constr} and \eqnok{unconstr} studied
in this paper have the property that the corresponding
function $\phi$ can be expressed as in \eqnok{phiuv}.

\subsection{Implementation details of the variant of Nesterov's smooth method 
for \eqnok{unconstr-sadpt2d}}
\label{sma-implement-2}



The implementation details of the variant of Nesterov's smooth method 
(see Subsection \ref{nest-smooth}) for solving formulation
\eqnok{unconstr-sadpt2d} (that is, the dual of \eqnok{unconstr-sadpt2})
are addressed in this subsection. In particular, we describe in the
context of this formulation the prox-function, the Lipschitz constant $L$
and the subproblem \eqnok{proxsub1} used by
the variant of Nesterov's smooth algorithm of Subsection \ref{nest-smooth}.

For the purpose of our implementation, we reformulate problem 
\eqnok{unconstr-sadpt2d} into the problem
\beq \label{unconstr-sadpt2d-1}
\min\limits_{W \in \Omega_1}\max\limits_{X\in \cB_F^{p \times q}(1)} 
\left\{-\lambda m r_x\cG(X) \bu W - \frac12\|r_x \Lambda X - \K\|^2_F\right\}
\eeq 
obtained by scaling the variable $X$ of \eqnok{unconstr-sadpt2d} as
$X \leftarrow X/r_x$, and multiplying the resulting formulation by $-1$.
 From now on, we will focus on formulation \eqnok{unconstr-sadpt2d-1}
rather than \eqnok{unconstr-sadpt2d}. 

Let $n:= p+q$, $u := W$, $v :=X$ and define
\beqas
&& U := \Omega_1 \subseteq \cS^n =: \cU, \\ [4pt] 
&& V := \cB_F^{p \times q}(1) \subseteq \Re^{p\times q}  =: \cV,
\eeqas
and
\beq \label{phiuv1}
\phi(u,v) := -\lambda m r_x\cG(v) \bu u - \frac12\|r_x \Lambda v - \K\|^2_F, \ \ \forall 
(u,v)\in U \times V,
\eeq
where $\Omega_1$ is defined in \eqnok{Omega}. Also, assume that the
norm on $\cU$ is chosen as
\[
\|u\|_\cU := \|u\|_F, \ \ \forall u\in\cU.
\]
Our aim now is to show that $\phi$ satisfies Assumptions B.1-B.3
with $\|\cdot\|_\cU$ as above and some Lipschitz constant
$L>0$, and hence that the variant of Nesterov's method can be applied to
the corresponding saddle-point formulation \eqnok{unconstr-sadpt2d-1}.
This will be done with the help of Proposition \ref{Nest}. Indeed, the
function $\phi$ is of the form \eqnok{phiuv} with $\theta \equiv 0$ and
the functions $\cE$ and $h$ given by
\beqas
\cE v &:=& -\lambda m r_x\cG(v), \ \ \forall v \in \cV, \\ 
h(v) &:=& \frac12\|r_x \Lambda v - \K\|^2_F, \ \ \forall v\in \cV. 
\eeqas
Assume that we fix the norm on $\cV$ to be the Frobenius norm, i.e.,
$\|\cdot\|_\cV=\|\cdot\|_F$.
Then, it is easy to verify that the above function $h$ is
strongly convex with modulus $\sigma_V:=r_x^2/\|\Lambda^{-1}\|^2$
with respect to $\|\cdot\|_\cV=\|\cdot\|_F$.
Now, using \eqnok{normC1}, we obtain
\beqa
\|\cE\|_{\cU,\cV} &=& \max \left\{ \|\lambda m r_x\cG(v)\|^*_\cU: 
\ v \in \cV, \|v\|_\cV \le 1 \right\},   
\nn \\ [5pt]
&=& \lambda m r_x  \max\left\{ \|\cG(v)\|_F:\ v\in \cV,
\|v\|_F \le 1 \right\}, \nn \\ [5pt]
&=& \lambda m r_x  \max\left\{\sqrt{2}\|v\|_F:\ v\in \cV, 
\|v\|_F \le 1\right\} = \sqrt2\lambda m r_x.
 \label{cE-2}
\eeqa 
Hence, by Proposition \ref{Nest}, we conclude that
$\phi$ satisfies Assumptions B.1-B.3 with
$\|\cdot\|_\cU=\|\cdot\|_F$ and
\[
L = \|\cE\|^2_{U,V}/{\vsigma} = 2\lambda^2m^2\|\Lambda ^{-1}\|^2. 
\] 

The prox-function $\dU(\cdot)$ for the set $U$ used in the
variant of Nesterov's algorithm is defined as
\beq \label{dv3}
\dU(u) = \tr(u\log u) + \log n, \ \ \forall u \in U=\Omega_1.
\eeq
We can easily see that $\dU(\cdot)$ is a strongly differentiable convex 
function on $U$ with modulus $\usigma = m$ with respect to the norm 
$\|\cdot\|_\cU=\|\cdot\|_F$. Also, it is easy to verify that
$\min\{p_U(u) : u \in U\} = 0$ and that
\beqa
u_0 & := & \arg\min_{u \in U} p_U(u) = I/n, \label{u00} \\
\DU & := & \max_{u \in U} \dU(u) = \log (n/m). \nn
\eeqa

As a consequence of the above discussion and Theorem \ref{mtm-sm1},
we obtain the following result.

\begin{theorem} \label{compl-3-1}
For a given $\epsilon>0$, the variant of Nesterov's smooth method applied to 
\eqnok{unconstr-sadpt2d} finds an $\epsilon$-optimal solution of problem 
\eqnok{unconstr-sadpt2d} and its dual, and hence of problem \eqnok{unconstr},
in a number of iterations which does not exceed
\beq \label{sma-compl3}
\left\lceil
\frac{2\sqrt{2}\lambda \|\Lambda ^{-1}\|}{\sqrt{\epsilon}} \sqrt{m\log (n/m)}
\right\rceil.
\eeq     
\end{theorem}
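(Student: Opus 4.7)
The plan is to invoke Corollary \ref{mtm-sm1} with the parameter sequence $\alpha_k=(k+1)/2$ applied to the rescaled saddle-point formulation \eqnok{unconstr-sadpt2d-1}, and then to verify that the resulting $\epsilon$-optimal pair pulls back to an $\epsilon$-optimal solution of the original problem \eqnok{unconstr}. All the ingredients needed to drive this computation have been assembled in the discussion immediately preceding the theorem: via Proposition \ref{Nest} the function $\phi$ in \eqnok{phiuv1} is $L$-Lipschitz-differentiable with $L=2\lambda^2 m^2\|\Lambda^{-1}\|^2$; the entropy-type prox-function $p_U(u)=\tr(u\log u)+\log n$ on $\Omega_1$ is strongly convex with modulus $\sigma_U=m$ with respect to $\|\cdot\|_F$; and its diameter satisfies $D_U=\max_{u\in\Omega_1} p_U(u)=\log(n/m)$, attained at $u=I/n$.

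First I would substitute these three quantities into the iteration-complexity bound from Corollary \ref{mtm-sm1}, which states that the gap $f(u^{sd}_k)-g(v_k)$ is at most $4LD_U/(\sigma_U k(k+1))$. Requiring this gap to be at most $\epsilon$ and using $k(k+1)\ge k^2$ gives the sufficient condition $k\ge 2\sqrt{LD_U/(\sigma_U\epsilon)}$. A direct calculation gives
\[
2\sqrt{\frac{LD_U}{\sigma_U \epsilon}}
= 2\sqrt{\frac{2\lambda^2 m^2\|\Lambda^{-1}\|^2\log(n/m)}{m\,\epsilon}}
= \frac{2\sqrt{2}\,\lambda\,\|\Lambda^{-1}\|}{\sqrt{\epsilon}}\,\sqrt{m\log(n/m)},
\]
and taking the ceiling recovers the claimed expression \eqnok{sma-compl3}.

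The remaining bookkeeping is to propagate the $\epsilon$-optimality back through the chain of reformulations. The problem \eqnok{unconstr-sadpt2d-1} is obtained from \eqnok{unconstr-sadpt2d} by the invertible change of variables $X\leftarrow X/r_x$ and multiplication of the objective by $-1$; both operations preserve primal-dual gaps verbatim, so an $\epsilon$-optimal solution of \eqnok{unconstr-sadpt2d-1} yields an $\epsilon$-optimal solution of \eqnok{unconstr-sadpt2d}. By the strict concavity/convexity structure established through Proposition \ref{Nest}, the min-max and max-min values coincide, so \eqnok{unconstr-sadpt2d} has the same optimal value as \eqnok{unconstr-sadpt2}, and thus an $\epsilon$-optimal dual solution pairs with an $\epsilon$-optimal primal solution $X_\epsilon\in\cB_F^{p\times q}(r_x)$ of \eqnok{unconstr-sadpt2}. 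Finally, Theorem \ref{sadpt-thm2} translates this directly into an $\epsilon$-optimal solution of \eqnok{unconstr}.

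I expect the main obstacle to be essentially mechanical rather than conceptual: keeping careful track of the factor $r_x$ absorbed into $X$ in the rescaling, and confirming that the sign flip and rescaling leave the primal-dual gap unchanged rather than rescaled. Once that is verified, the theorem is a one-line consequence of Corollary \ref{mtm-sm1} combined with Theorem \ref{sadpt-thm2}.
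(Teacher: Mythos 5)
Your proof is correct and follows the same route as the paper: apply Corollary \ref{mtm-sm1} to the rescaled formulation \eqnok{unconstr-sadpt2d-1} using the three ingredients assembled in the paragraphs preceding the theorem ($L=2\lambda^2 m^2\|\Lambda^{-1}\|^2$, $\sigma_U=m$, $D_U=\log(n/m)$), compute $2\sqrt{LD_U/(\sigma_U\epsilon)}$, and pull the resulting $\epsilon$-optimal primal-dual pair back through the rescaling, the sign flip, and Theorem \ref{sadpt-thm2}. One small slip in the exposition: you assert that $D_U=\max_{u\in\Omega_1}p_U(u)=\log(n/m)$ is ``attained at $u=I/n$,'' but $u=I/n$ is in fact the \emph{minimizer}, where $p_U(I/n)=0$; the maximum $\log(n/m)$ is attained at rank-$m$ elements of $\Omega_1$ whose nonzero eigenvalues all equal $1/m$. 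Since you only use the numerical value $\log(n/m)$ in the iteration bound, the slip does not affect the argument.
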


\gap


We observe that the iteration-complexity given in \eqnok{sma-compl3} is in 
terms of the transformed data of problem \eqnok{multi-reg4}. We 
next relate it to the original data of problem \eqnok{multi-reg4}.

\begin{corollary} \label{compl-3-2}
For a given $\epsilon>0$, the variant of Nesterov's smooth method applied to 
\eqnok{unconstr-sadpt2d} finds an $\epsilon$-optimal solution of problem 
\eqnok{unconstr-sadpt2d} and its dual, and hence of problem \eqnok{unconstr},
in a number of iterations which does not exceed
\[
\left\lceil
\frac{2\sqrt{2}\lambda\|(A^TA)^{-1/2}\|}{\sqrt{\epsilon}} \sqrt{m\log (n/m)}
\right\rceil.
\]  
\end{corollary}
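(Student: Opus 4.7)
The plan is to observe that Corollary \ref{compl-3-2} differs from Theorem \ref{compl-3-1} only in that $\|\Lambda^{-1}\|$ has been replaced by $\|(A^TA)^{-1/2}\|$, so it suffices to show these two operator norms are equal and then invoke Theorem \ref{compl-3-1} directly.

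First, I would recall the decomposition fixed just before \eqnok{transform}: since $A$ has full column rank, there exist an orthonormal $Q \in \Re^{p \times p}$ and a positive diagonal $\Lambda \in \Re^{p \times p}$ with $A^TA = Q\Lambda^2 Q^T$. From this I would compute the symmetric square root of $(A^TA)^{-1}$, namely
\[
(A^TA)^{-1/2} = Q\Lambda^{-1}Q^T,
\]
which is well-defined because $\Lambda$ is a positive diagonal matrix and $Q$ is orthogonal.

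Next I would use invariance of the spectral (operator) norm under orthogonal similarity: since $Q^TQ = QQ^T = I$, we have
\[
\|(A^TA)^{-1/2}\| \;=\; \|Q\Lambda^{-1}Q^T\| \;=\; \|\Lambda^{-1}\|.
\]
Substituting this identity into the iteration bound \eqnok{sma-compl3} of Theorem \ref{compl-3-1} produces exactly the bound claimed in the corollary, which completes the proof.

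There is no real obstacle here — the argument is purely an algebraic rewriting of the already-established Theorem \ref{compl-3-1} in terms of the original data matrix $A$ rather than its transformed counterpart $\Lambda$. The only point to be careful about is to note explicitly that the orthogonality of $Q$ is what allows $\|Q\Lambda^{-1}Q^T\|$ to collapse to $\|\Lambda^{-1}\|$, so that no extra constants appear.
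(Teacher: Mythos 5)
Your proof is correct and takes essentially the same approach as the paper: both establish the norm identity $\|\Lambda^{-1}\| = \|(A^TA)^{-1/2}\|$ via the eigendecomposition $A^TA = Q\Lambda^2 Q^T$ and then invoke Theorem \ref{compl-3-1}. The paper phrases the identity through $\|\Lambda^{-1}\|=\|\Lambda^{-2}\|^{1/2}=\|(A^TA)^{-1}\|^{1/2}$, while you compute the symmetric square root $(A^TA)^{-1/2}=Q\Lambda^{-1}Q^T$ explicitly and appeal to orthogonal invariance of the spectral norm; these are the same argument in slightly different dress.
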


\begin{proof}
We know from Subsection \ref{reduction} that $A^TA=Q\Lambda^2 Q^T$,
where $Q\in \Re^{p\times p}$ is an orthonormal matrix. Using this relation, 
we have 
\[
\|\Lambda^{-1}\| = \|\Lambda^{-2}\|^{1/2} = \|(A^TA)^{-1}\|^{1/2}=\|(A^TA)^{-1/2}\|.
\]
The conclusion immediately follows from this identity and 
Theorem \ref{compl-3-1}.
\end{proof}

\gap

It is interesting to note that the iteration-complexity of
Corollary \ref{compl-3-2} depends on the data matrix $A$ but not on $B$.
Based on the discussion below,
the arithmetic operation cost per iteration of the variant of
Nesterov's smooth method when applied to problem
\eqnok{constr-sadpt2d} is bounded by ${\cal O}(mpq)$ where
$m=\min(p,q)$, due to the fact that its most expensive operation consists of
finding a partial singular value decomposition of a $p \times q$ matrix $h$
as in \eqnok{singdec}.
Thus, the overall arithmetic-complexity of the variant of
Nesterov's smooth method when applied to \eqnok{unconstr-sadpt2d} is 
\[
{\cal O}\left(\frac{\lambda\|(A^TA)^{-1/2}\|}{\sqrt{\epsilon}}m^{3/2}
pq\sqrt{\log (n/m)}\right).
\] 

After having completely specified all the ingredients required by the variant 
of Nesterov's smooth method for solving \eqnok{unconstr-sadpt2d-1}, we now
discuss some of the computational technicalities involved in the
actual implementation of the method.

First, recall that, for a given $u\in U$, the optimal solution
for the maximization subproblem \eqnok{fu} needs to be found
in order to compute the gradient of $\nabla f(u)$. 
Using \eqnok{phiuv1} and the fact that $V=\cB_F^{p \times q}(1)$,
we see that the maximization problem \eqnok{fu} is equivalent to
\beq \label{qp-ball}
\min\limits_{v \in\cB_F^{p \times q}(1)} 
\frac12\|r_x \Lambda v - \K\|^2_F + G \bullet v,
\eeq
where $G:={\cal G}^*(u) \in\Re^{p\times q}$.
We now briefly discuss how to solve \eqnok{qp-ball}.
For any $\xi \ge 0$, let
\[
v(\xi) = (r^2_x\Lambda ^2+\xi I)^{-1} (r_x \Lambda H-G), \ \ \ \ 
\Psi(\xi) = \|v(\xi)\|^2_F - 1.
\]
If $\Psi(0) \le 0$, then clearly $v(0)$ is the optimal solution of
problem \eqnok{qp-ball}. Otherwise, the optimal solution of problem
\eqnok{qp-ball} is equal to $v(\xi^*)$, 
where $\xi^*$ is the root of the equation $\Psi(\xi)=0$. The latter
can be found by well-known root finding schemes specially taylored
for solving the above equation.

In addition, each iteration of the variant of Nesterov's smooth method
requires solving subproblem \eqnok{proxsub1}. In view of \eqnok{gu} and
\eqnok{phiuv1}, it is easy to see that
for every $u\in U$, we have $\nabla f(u) = \cG(v)$ for some 
$v\in\Re^{p \times q}$.
Also, $\nabla \dU(u_0) = (1-\log n) I$ due to
\eqnok{dv3} and \eqnok{u00}. These remarks together 
with \eqnok{Bregdist} and \eqnok{dv3} imply that
subproblem \eqnok{proxsub1} is of the form
\beq \label{fg-subprob}
\min\limits_{u\in \Omega_1} \ \left(\varsigma  I + \cG(h)\right) \bullet u
+ \tr(u \log u)
\eeq
for some real scalar $\varsigma$ and $h\in\Re^{p \times q}$, where $\Omega_1$
is given by \eqnok{Omega}.

We now present an efficient approach for solving \eqnok{fg-subprob}
which, instead of finding the eigenvalue factorization of
the $(p+q)$-square matrix $\varsigma  I + \cG(h)$, computes
the singular value decomposition of the smaller $p \times q$-matrix $h$.
First, we compute a singular value decomposition of $h$, i.e.,
$h = \tU \Sigma \tV^T$, where $\tU \in \Re^{p \times m}$,
$\tV \in \Re^{q \times m}$ and $\Sigma$ are such that
\beq \label{singdec}
\tU^T \tU = I, \ \ \ \Sigma = \diag(\sigma_1(h), \ldots, \sigma_m(h)), \ \ \
\tV^T \tV = I,
\eeq
where $\sigma_1(h), \ldots, \sigma_m(h)$ are the $m=\min(p,q)$ singular
values of $h$.
Let $\xi_i$ and $\eta_i$ denote the $i$th column of $\tU$ and $\tV$, 
respectively. Using \eqnok{Gx}, it is easy to see that 
\beq \label{fi}
f^i = \frac{1}{\sqrt{2}}\left(\ba{c}\eta_i \\ \xi_i \ea \right), \ i=1,\ldots, 
m; \ \ \ f^{m+i} = \frac{1}{\sqrt{2}}\left(\ba{c}\eta_i \\ -\xi_i \ea \right), 
\ i=1,\ldots, m,
\eeq
are orthonormal eigenvectors of $\cG(h)$ with eigenvalues 
$\sigma_1(h), \ldots, \sigma_m(h), -\sigma_1(h), \ldots, -\sigma_m(h)$,
respectively. 
Now let $f^i \in \Re^{n}$ for $i=2m+1, \ldots, n$ be such that
the matrix  $F:=(f^1, f^2, \ldots, f^{n})$ satisfies $F^TF = I$.
It is well-known that the vectors $f^i \in \Re^{n}$,
$i=2m+1, \ldots, n$, are eigenvectors of $\cG(h)$ corresponding to the zero
eigenvalue (e.g., see \cite{BenNem01}). Thus, we obtain the following
eigenvalue decomposition of $\varsigma I + \cG(h)$:
\[
\varsigma I + \cG(h) = F \diag(a) F^T,  \ \ \ a=\varsigma e + (\sigma_1(h), \ldots, 
\sigma_m(h), -\sigma_1(h), \ldots, -\sigma_m(h), 0, \ldots, 0)^T.
\]  
Using this relation and \eqnok{Omega} with $t=1$,
it is easy to see that the optimal 
solution of \eqnok{fg-subprob} is $v^* = F\diag(w^*)F^T$, where 
$w^*\in \Re^{n}$ is the unique optimal solution of the problem
\beq \label{reduced-prob}
\ba{ll}
\min & a^T w + w^T \log w \\
\mbox{s.t.} & e^T w = 1, \\ 
& 0 \le w \le e/m.             
\ea
\eeq
It can be easily shown that $w^*_i = \min\{\exp(-a_i-1-\xi^*),1/m\}$, 
where $\xi^*$ is the unique root of the equation
\[
\sum^{n}_{i=1} \min\{\exp(-a_i-1-\xi),1/m\} -1 = 0.
\]
Let $\vartheta := \min\{\exp(-\varsigma-1-\xi^*),1/m\}$. In view of the above formulas for 
$a$ and $w^*$ , we immediately see that  
\beq \label{wi}
w^*_{2m+1}=w^*_{2m+2}= \cdots = w^*_{n} = \vartheta.
\eeq
Further, using the fact that $FF^T = I$, we have 
\[
\sum\limits_{i=2m+1}^{n} f^i(f^i)^T = I - \sum\limits_{i=1}^{2m}f^i(f^i)^T. 
\]
Using this result and \eqnok{wi}, we see that the optimal solution 
$v^*$ of \eqnok{fg-subprob} can be efficiently computed as
\[
v^* = F \diag(w^*) F^T = \sum\limits_{i=1}^{n} w^*_i f^i(f^i)^T 
 =  \vartheta I +  \sum\limits_{i=1}^{2m} (w^*_i-\vartheta)f^i (f^i)^T,     
\] 
where the scalar $\vartheta$ is defined above and the vectors
$\{f^i: i=1,\ldots 2m\}$ are given by \eqnok{fi}.


Finally, to terminate the variant of Nesterov's smooth method, we need 
to evaluate the primal and dual objective functions of problem 
\eqnok{unconstr-sadpt2d-1}. As mentioned above, the 
primal objective function $f(u)$ of \eqnok{unconstr-sadpt2d-1} can be 
computed by solving a problem of the form \eqnok{qp-ball}. Additionally, 
in view of \eqnok{char2} and \eqnok{Omega}, the dual objective function $g(v)$ 
of \eqnok{unconstr-sadpt2d-1} can be computed as
\[
g(v) = -\frac12\|r_x \Lambda v - \K\|^2_F 
- \lambda r_x \sum\limits_{i=1}^m \sigma_i(v), \ \ \forall v \in V.
\]

\subsection{Implementation details of the variant of Nesterov's smooth method 
for \eqnok{constr-sadpt2d}}
\label{sma-implement-3}

The implementation details of the variant of Nesterov's smooth method
(see Subsection \ref{nest-smooth}) for solving formulation
\eqnok{constr-sadpt2d} (that is, the dual of \eqnok{constr-sadpt2})
are addressed in this subsection. In particular, we describe in the
context of this formulation the prox-function, the Lipschitz constant $L$
and the subproblem \eqnok{proxsub1} used by
the variant of Nesterov's smooth algorithm of Subsection \ref{nest-smooth}.

For the purpose of our implementation, we reformulate problem 
\eqnok{constr-sadpt2d} into the problem
\beq \label{constr-sadpt2d-1}
\min\limits_{(t,W)\in \tOmega}\max\limits_{X\in \cB_F^{p \times q}(1)}
\left\{-\gamma [m\trx\cG(X) \bu W -Mt] - \frac12\|\trx\Lambda X - \K \|^2_{\F}\right\}
\eeq
obtained by scaling the variables $X$ of \eqnok{constr-sadpt2d} as
$X \leftarrow X/\trx$, and multiplying the resulting formulation by $-1$.
 From now on, our discussion in this subsection will focus on formulation
\eqnok{constr-sadpt2d-1} rather than \eqnok{constr-sadpt2d}. 
 
Let $n:= p+q$, $u :=(t,W)$, $v :=X$ and define
\beqas
&& U := \tOmega \subseteq \Re \times \cS^{n} =: \cU, \\ [4pt]
&& V :=  \cB_F^{p \times q}(1) \subseteq \Re^{p \times q} =: \cV
\eeqas
and
\beq \label{phiuv2}
\phi(u,v) := -\gamma [m\trx\cG(v) \bu W -Mt] -
\frac12\|\trx\Lambda v - \K \|^2_{\F}, \ \ \forall (u,v)\in U \times V,
\eeq
where $\tOmega$ is defined in \eqnok{tOmega}.
Also, assume that the norm on $\cU$ is chosen as
\[
\|u\|_\cU := (\xi t^2 + \|W\|^2_F)^{1/2}, \ \ \forall u=(t,W) \in \cU,
\]
where $\xi$ is a positive scalar that will be specified later.
Our aim now is to show that $\phi$ satisfies Assumptions B.1-B.3
with $\|\cdot\|_\cU$ as above and some Lipschitz constant
$L>0$, and hence that the variant of Nesterov's method can be applied to
the corresponding saddle-point formulation \eqnok{constr-sadpt2d-1}.
This will be done with the help of Proposition \ref{Nest}. Indeed, the
function $\phi$ is of the form \eqnok{phiuv} with $\theta$,
$\cE$ and $h$ given by
\beqa
\theta(u) &:=& \gamma Mt, \ \ \forall u=(t,W) \in \cU, \nn \\
\cE v &:=& (0,-\gamma m \tilde r_x\cG(v)), \ \ \forall v \in \cV, \label{E10} \\
h(v) &:=& \frac12\|\tilde r_x \Lambda v - \K\|^2_F, \ \ \forall v\in \cV. \nn
\eeqa
Clearly, $\theta$ is a linear function, and thus it is a
$0$-Lipschitz-differentiable function on $U$ with respect to $\|\cdot\|_\cU$.
Now, assume that we fix the norm on $\cV$ to be the Frobenius norm, i.e.,
$\|\cdot\|_\cV=\|\cdot\|_F$.
Then, it is easy to verify that the above function $h$ is
strongly convex with modulus $\sigma_V:=\tilde r_x^2/\|\Lambda^{-1}\|^2$
with respect to $\|\cdot\|_\cV=\|\cdot\|_F$.
Now, using \eqnok{normC1}, \eqnok{E10} and the fact that
\beq \label{dnormU-1}
\|u\|^*_{\cU} = (\xi^{-1} t^2 + \|W\|^2_F)^{1/2}, \ 
\forall u=(t,W) \in \cU^*=\cU,
\eeq
we obtain
\beqa
\|\cE\|_{\cU,\cV} &=& \max \left\{ \|(0,-\gamma m \trx\cG(v))\|^*_\cU: 
\ v \in \cV, \|v\|_\cV \le 1 \right\},   
\nn \\ [5pt]
&=& \gamma m \trx \max\left\{ \|\cG(v)\|_F:\ v\in \cV,
\|v\|_F \le 1 \right\}, \nn \\ [5pt]
&=& \gamma m \trx \max\left\{\sqrt{2}\|v\|_F:\ v\in \cV, 
\|v\|_F \le 1\right\} = \sqrt2\gamma m \trx .
\eeqa 
Hence, by Proposition \ref{Nest}, we conclude that
$\phi$ satisfies Assumptions B.1-B.3 with
$\|\cdot\|_\cU=\|\cdot\|_F$ and
\beq \label{lipconst}
L = L_\theta+ \|\cE\|^2_{U,V}/{\vsigma} = 2\gamma^2m^2\|\Lambda ^{-1}\|^2.
\eeq

We will now specify the prox-function $\dU$ for the set $U$
used in the variant of Nesterov's algorithm. We let
\beq
\dU(u) = \tr(W\log W) + a t \log t + b t + c,
\ \ \forall u=(t,W) \in U, \label{du-1}
\eeq
where
\beq \label{abc}
a:=\log \frac{n}{m}, \ \ \ b := \log n - a -1 = \log m - 1, \ \ \
c:= a + 1.
\eeq
For a fixed $t \in [0,1]$, it is easy to see that
\[
\min_{W \in \Omega_t} \dU(t,W) = \psi(t)
:= t \log\frac{t}{n} + a t \log t + bt +c,
\]
and that the minimum is achieved at $W=tI/n$. Now,
\[
\psi'(1)= \log\frac{1}{n} + 1 + a ( \log 1 + 1) + b = 1 - \log n + a + b =0,
\]
where the last equality follows from the second identity in \eqnok{abc}.
These observations together with \eqnok{tOmega} allow us to conclude that
\beqa
&\arg\min_{u \in U} \dU(u) = u_0 := (1, I/n), \label{u0-2} \\
&\min_{u\in U} \dU(u) = \psi(1) = -\log n + b + c = 0,  \label{DU}
\eeqa
where the last equality is due to second and third identities in \eqnok{abc}.
Moreover, it is easy to see that
\beq \label{DU1}
\DU := \max_{u\in U} \dU(u) = \max_{t \in [0,1]}
t \log \frac{t}{m} + a t \log t + b t + c
= c + \max \left\{ 0, b - \log m \right\} = 1+ \log \frac{n}m,
\eeq
where the last identity is due to \eqnok{abc}. Also,
we easily see that $\dU(\cdot)$ is a strongly differentiable
convex function on $U$ with modulus
\beq \label{ssi}
\usigma = \min(a/{\xi}, \, m)
\eeq
with respect to the norm $\|\cdot\|_\cU$.

In view of \eqnok{lipconst}, \eqnok{DU1}, \eqnok{ssi} and
Corollary \ref{mtm-sm1}, it follows that
the iteration-complexity of the variant of Nesterov's smooth method for 
finding an $\epsilon$-optimal solution of \eqnok{constr-sadpt2d-1} and
its dual is bounded by
\[
\Gamma (\xi) = \left \lceil \frac{2\gamma m \|\Lambda ^{-1}\|}
{\sqrt{\epsilon}}\sqrt{\frac{2[1+ \log(n/m)]}
{\min(a/{\xi}, \, m)}} \, \right \rceil.
\] 

As a consequence of the above discussion and Corollary \ref{mtm-sm1},
we obtain the following result.

\begin{theorem} \label{compl-1-1}
For a given $\epsilon>0$, the variant of Nesterov's smooth method,
with prox-function defined by \eqnok{du-1}-\eqnok{abc}, $L$ given by 
\eqnok{lipconst} and $\usigma$ given by \eqnok{ssi} with $\xi=a/m$ , 
applied to \eqnok{constr-sadpt2d-1}, finds an $\epsilon$-optimal solution 
of problem \eqnok{constr-sadpt2d-1} and its dual
in a number of iterations which does not exceed
\beq \label{sma-compl1}
\left\lceil 
\frac{2\sqrt{2}\gamma\|\Lambda ^{-1}\|
\sqrt{m}}{\sqrt{\epsilon}}\sqrt{1+\log(n/m)}
\right\rceil.
\eeq  
\end{theorem}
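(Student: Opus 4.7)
The plan is to observe that this theorem is essentially a direct corollary of Corollary \ref{mtm-sm1}, once we feed in the specific values of $L$, $D_U$, and $\sigma_U$ that have been established for the formulation \eqnok{constr-sadpt2d-1} in the preceding discussion. The three ingredients are already in place: by \eqnok{lipconst} the Lipschitz constant is $L = 2\gamma^2 m^2 \|\Lambda^{-1}\|^2$; by \eqnok{DU1} the prox-function diameter is $D_U = 1 + \log(n/m)$; and by \eqnok{ssi} the strong-convexity modulus of $p_U$ is $\sigma_U = \min(a/\xi,\,m)$. So the main work is purely computational.

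First, I would justify the particular choice $\xi = a/m$. Corollary \ref{mtm-sm1} says the iteration count is bounded by $2\sqrt{L D_U/(\sigma_U \epsilon)}$, and since $L$ and $D_U$ are independent of $\xi$ while $\sigma_U$ depends on $\xi$ only through the factor $\min(a/\xi,\,m)$ in the denominator, tightening the bound means maximizing $\min(a/\xi,\,m)$ over $\xi > 0$. The two terms in the min are monotonically decreasing and constant respectively, so the max is achieved exactly when $a/\xi = m$, i.e., at $\xi = a/m$, and the resulting value is $\sigma_U = m$. This is the single point where the theorem makes a nontrivial choice; everything else is substitution.

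Next, I would plug in $L = 2\gamma^2 m^2 \|\Lambda^{-1}\|^2$, $D_U = 1 + \log(n/m)$, and $\sigma_U = m$ into the bound of Corollary \ref{mtm-sm1}, obtaining
\[
2\sqrt{\frac{L D_U}{\sigma_U \epsilon}} \;=\; 2\sqrt{\frac{2\gamma^2 m^2 \|\Lambda^{-1}\|^2 \,(1+\log(n/m))}{m\,\epsilon}} \;=\; \frac{2\sqrt{2}\,\gamma\, \|\Lambda^{-1}\|\sqrt{m}}{\sqrt{\epsilon}}\sqrt{1+\log(n/m)},
\]
and then take the ceiling to match \eqnok{sma-compl1}. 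Since the iteration count must be an integer, this yields exactly the stated bound.

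There is no genuine obstacle: the heavy lifting (verifying Assumptions B.1--B.3 via Proposition \ref{Nest}, computing $\|\cE\|_{\cU,\cV}$, computing $D_U$, and computing $\sigma_U$) has all been carried out in the paragraphs preceding the theorem. The only thing that even resembles a step requiring thought is the optimization in $\xi$, which is elementary because the objective is a one-dimensional piecewise monotone function whose optimum sits at the unique kink $\xi = a/m$. Consequently the proof can be written as a one-line invocation of Corollary \ref{mtm-sm1} preceded by the short remark justifying $\xi = a/m$.
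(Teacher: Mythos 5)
Your proposal is correct and follows essentially the same route as the paper: the paper simply observes that the preceding discussion already established the bound $\Gamma(\xi)$ for any $\xi>0$ and sets $\xi=a/m$ to get \eqnok{sma-compl1}. Your added remark explaining why $\xi=a/m$ is the optimal choice (it equalizes the two terms in $\min(a/\xi,\,m)$, giving $\usigma=m$) is a small but accurate elaboration that the paper leaves implicit.
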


\begin{proof}
We have seen in the discussion preceding this theorem that
the iteration-complexity of the variant of Nesterov's smooth method for 
finding an $\epsilon$-optimal solution of \eqnok{constr-sadpt2d-1} and
its dual is bounded by $\Gamma(\xi)$ for any $\xi>0$. Taking $\xi=a/m$, 
we obtain the iteration-complexity bound \eqnok{sma-compl1}.
\end{proof}

\vgap

We observe that the iteration-complexity given in \eqnok{sma-compl1} is in 
terms of the transformed data of problem \eqnok{multi-reg4}. We next relate 
it to the original data of problem \eqnok{multi-reg4}. The proof of the 
following corollary is similar to that of Corollary \ref{compl-3-2}.

\begin{corollary} \label{compl-1-2}
For a given $\epsilon>0$, the variant of Nesterov's smooth method,
with prox-function defined by \eqnok{du-1}-\eqnok{abc}, $L$ given by 
\eqnok{lipconst} and $\usigma$ given by \eqnok{ssi} with $\xi=a/m$ , applied to 
applied to \eqnok{constr-sadpt2d} finds an $\epsilon$-optimal solution
of problem \eqnok{constr-sadpt2d} and its dual
in a number of iterations which does not exceed
\beq \label{compl-constr}
\left\lceil
\frac{2\sqrt{2}\gamma\|(A^TA)^{-1/2}\| \sqrt{m}}{\sqrt{\epsilon}}
\sqrt{\log (n/m)+1}
\right\rceil.
\eeq
\end{corollary}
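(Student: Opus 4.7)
The plan is to mirror the proof of Corollary \ref{compl-3-2} essentially verbatim, since the only difference between the bounds \eqnok{sma-compl1} and \eqnok{compl-constr} is the replacement of $\|\Lambda^{-1}\|$ by $\|(A^TA)^{-1/2}\|$, and this substitution is a pure linear-algebra identity coming from the simplification performed in Subsection \ref{reduction}.

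First, I would recall from \eqnok{transform} and the discussion preceding it that $\Lambda\in\Re^{p\times p}$ is the positive diagonal matrix appearing in the eigendecomposition $A^TA = Q\Lambda^2 Q^T$, where $Q\in\Re^{p\times p}$ is orthonormal. Using orthogonal invariance of the operator norm, this yields
\begin{equation*}
\|\Lambda^{-1}\| \;=\; \|\Lambda^{-2}\|^{1/2} \;=\; \|Q\Lambda^{-2}Q^T\|^{1/2} \;=\; \|(A^TA)^{-1}\|^{1/2} \;=\; \|(A^TA)^{-1/2}\|.
\end{equation*}

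Then I would simply invoke Theorem \ref{compl-1-1}, which gives an iteration-complexity bound of the form \eqnok{sma-compl1} in terms of $\|\Lambda^{-1}\|$ for finding an $\epsilon$-optimal solution of \eqnok{constr-sadpt2d-1} and its dual. Since the scaling in passing from \eqnok{constr-sadpt2d} to \eqnok{constr-sadpt2d-1} is a bijective change of variables on a bounded set (multiplication of $X$ by a nonzero scalar and negation of the objective), an $\epsilon$-optimal solution of one problem yields an $\epsilon$-optimal solution of the other, so the same bound applies to \eqnok{constr-sadpt2d}. Substituting the identity above into \eqnok{sma-compl1} directly produces \eqnok{compl-constr}.

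There is really no obstacle here: the result is a one-line corollary of Theorem \ref{compl-1-1} combined with the spectral identity for $\|\Lambda^{-1}\|$. The only mild point worth flagging is that the two optimal values match — that is, solving \eqnok{constr-sadpt2d-1} to accuracy $\epsilon$ is the same as solving \eqnok{constr-sadpt2d} to accuracy $\epsilon$ — but this follows immediately from the linearity of the scaling transformation used to derive \eqnok{constr-sadpt2d-1}.
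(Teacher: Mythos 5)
Your proposal is correct and follows essentially the same route as the paper, which itself only remarks that the proof ``is similar to that of Corollary~\ref{compl-3-2}'': the entire argument is the spectral identity $\|\Lambda^{-1}\| = \|\Lambda^{-2}\|^{1/2} = \|(A^TA)^{-1}\|^{1/2} = \|(A^TA)^{-1/2}\|$ obtained from the orthogonal diagonalization $A^TA = Q\Lambda^2 Q^T$, substituted into the bound \eqnok{sma-compl1} of Theorem~\ref{compl-1-1}. Your extra remark that one must pass from \eqnok{constr-sadpt2d-1} (the scaled, negated formulation to which Theorem~\ref{compl-1-1} literally applies) back to \eqnok{constr-sadpt2d} is a useful clarification that the paper leaves implicit, though I would phrase it slightly more carefully: the optimal values of the two problems are negatives of each other rather than equal, but because the negation also exchanges the roles of the min-max and max-min problems, an $\epsilon$-optimal primal--dual pair for one corresponds exactly to an $\epsilon$-optimal primal--dual pair for the other, so the iteration bound carries over unchanged.
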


Observe that, in view of Lemma \ref{penalty-lem} with $\bar X=0$
and Theorem \ref{exactpenal-thm2},
\eqnok{compl-constr} is also an iteration-complexity bound
for finding an $\epsilon$-optimal solution of problem
\eqnok{constr} whenever
\[
\gamma = \frac{\|H\|^2_F}{M} = \frac{\| (A^TA)^{-1/2}A^TB\|^2_F}{M},
\]
where the later equality is due to \eqnok{transform}.

Based on the discussion below and in Subsection \ref{sma-implement-2},
the arithmetic operation cost per iteration of the variant of
Nesterov's smooth method when applied to problem
\eqnok{constr-sadpt2d} is bounded by ${\cal O}(mpq)$ where
$m=\min(p,q)$, due to the fact that its most expensive operation consists of
finding a partial singular value decomposition of a $p \times q$ matrix $h$
as in \eqnok{singdec}.
Thus, the overall arithmetic-complexity of the variant of
Nesterov's smooth method when applied to \eqnok{constr-sadpt2d} is 
\[
{\cal O}\left(\frac{\gamma\|(A^TA)^{-1/2}\|}{\sqrt{\epsilon}}m^{3/2}
pq\sqrt{\log (n/m)}\right).
\] 

After having completely specified all the ingredients required by the variant 
of Nesterov's smooth method for solving \eqnok{constr-sadpt2d-1},
we now discuss some of the computational technicalities involved in the
actual implementation of the method.

First, for a given $u\in U$, the optimal solution
for the maximization subproblem \eqnok{fu} needs to be found
in order to compute the gradient of $\nabla f(u)$.
The details here are similar to the corresponding ones described
in Subsection \ref{sma-implement-2} (see the paragraph containing
relation \eqnok{qp-ball}).

In addition, each iteration of the variant of Nesterov's smooth method
requires solving subproblem \eqnok{proxsub1}.
In view of \eqnok{gu} and \eqnok{phiuv2}, it 
is easy to observe that for every $u=(t,W)\in U$,
we have $\nabla f(u) = (\eta, \cG(v))$ for some $\eta \in \Re$ and
$v\in\Re^{p \times q}$.
Also, by \eqnok{du-1}, \eqnok{abc} and \eqnok{u0-2}, we easily see that
$\nabla \dU(u_0) = (\log n -1) (1, -I)$. 
Using these results along with \eqnok{Bregdist} and \eqnok{dv3},
we easily see that subproblem \eqnok{proxsub1} is equivalent to
one of the form
\beq \label{prox-subprob1}
\min\limits_{(t,W)\in\tOmega}
\left\{\left(\varsigma I + \cG(h)\right) \bullet W + \alpha t +
\tr(W\log W) + a t \log t \right\} 
\eeq
for some $\alpha, \varsigma \in \Re$ and $h\in\Re^{p \times q}$, where 
$a$ and $\tOmega$ are given by \eqnok{abc} and \eqnok{tOmega},
respectively.

We now discuss how the above problem can be efficiently solved.
First, note that by \eqnok{tOmega}, we have
$(t,W) \in \tilde \Omega$ if, and only if,
$W=tW'$ for some $W' \in \Omega_1$. This observation together with
the fact that $\tr W' =1$ for every $W' \in \Omega_1$ allows us to
conclude that problem \eqnok{prox-subprob1} is equivalent to
\begin{align}
\min\limits_{W'\in\Omega_1, \, t \in [0,1]} &
\left\{ t \left(\varsigma I + \cG(h)\right) \bullet W' + \alpha t +
t \left[ \tr(W' \log W')  + (\log t) \tr(W') \right] + a t \log t \right\}
\label{mintW} \\
= & \min_{t \in [0,1]} \alpha t + (a+1) t \log t + t d, \label{mint}
\end{align}
where
\beq \label{minW'}
d:= \min\limits_{W'\in\Omega_1}
\left(\varsigma I + \cG(h)\right) \bullet W' + \tr(W' \log W').
\eeq
Moreover, if $W'$ is the optimal solution of \eqnok{minW'} and
$t$ is the optimal solution of \eqnok{mint}, then $W=tW'$ is the
optimal solution of \eqnok{mintW}. Problem \eqnok{minW'}
is of the form \eqnok{fg-subprob} where an efficient scheme
for solving it is described in Subsection \ref{sma-implement-2}.
It is easy to see that the optimal solution of \eqnok{mint} is
given by
\[
t = \min \left[ 1 \, , \, \exp \left(-1- \frac{\alpha+d}{a+1} \right) \right].
\]

Finally, to terminate the variant of Nesterov's smooth approximation scheme, 
we need to properly evaluate the primal and dual objective functions of problem 
\eqnok{constr-sadpt2d-1} at any given point. As seen from \eqnok{fu}
and \eqnok{phiuv2}, the primal objective function $f(u)$ of
\eqnok{constr-sadpt2d-1} can be computed by solving a problem in the form
of \eqnok{qp-ball}. Additionally, 
in view of \eqnok{char4} and \eqnok{tOmega}, the dual objective function $g(v)$ 
of \eqnok{constr-sadpt2d-1} can be computed as
\[
g(v) = -\frac12\|\trx \Lambda v - \K\|^2_F 
- \gamma \left[\trx\sum\limits_{i=1}^m \sigma_i(v)-M\right]^+,
\ \ \forall v \in V.
\] 

\section{Computational results}
\label{comp} 


In this section, we report the results of our computational experiment which 
compares the performance of the variant of of Nesterov's smooth method discussed 
in Subsection \ref{sma-implement-2} for solving problem \eqnok{unconstr} with the 
interior point method implemented in SDPT3 version 4.0 (beta) \cite{ToTuTo06-1} 
on a set of randomly generated instances.

The random instances of \eqnok{unconstr} used in our experiments were generated as follows. 
We first randomly generated matrices $A\in\Re^{l \times p}$ and $B\in
\Re^{l \times q}$, where $p=2q$ and $l=10q$, with entries uniformly distributed
in $[0,1]$ for different values of $q$.
We then computed $H$ and $\Lambda$ for \eqnok{unconstr} according to the
procedures described in Subsection 
\ref{reduction} and set the parameter $\lambda$ in \eqnok{unconstr} to one.
In addition, all computations were performed on an Intel 
Xeon 5320 CPU (1.86GHz) and 12GB RAM running Red Hat Enterprise Linux 4 
(kernel 2.6.9).
 

In this experiment,  we compared the performance of the variant of Nesterov's smooth 
method (labeled as VNS) discussed in Subsection \ref{sma-implement-2} for solving 
problem \eqnok{unconstr} with the interior point method implemented in SDPT3 
version 4.0 (beta) \cite{ToTuTo06-1} for solving the cone programming reformulation 
\eqnok{cone-repr1}. The code for VNS is written in C, and the initial point for this 
method is set to be $u_0=I/(p+q)$. It is worth mentioning that the code SDPT3 uses MATLAB 
as interface to call several C subroutines to handle all its heavy computational tasks.
SDPT3 can be suitably applied to solve a standard cone programming with the underlying 
cone represented as a Cartesian product of nonnegative orthant, second-order cones, and 
positive semidefinite cones. The method VNS terminates once the duality gap is less than 
$\epsilon=10^{-8}$, and SDPT3 terminates once the relative accuracy is less than $10^{-8}$. 

The performance of VNS and SDPT3 for our randomly generated instances 
are presented in Table \ref{result-3}. The problem size $(p,q)$ is given in 
column one. The numbers of iterations of VNS and SDPT3 are given in columns 
two and three, and the objective function values are given in columns four and 
five, CPU times (in seconds) are given in columns six to seven, and the 
amount of memory (in mega bytes) used by VNS and SDPT3 are given in the last two 
columns, respectively. The symbol ``N/A'' means ``not available''. The computational 
result of SDPT3 for the instance with $(p,q)=(120,60)$ is not available since it 
ran out of the memory in our machine (about 15.73 giga bytes). We conclude from 
this experiment that the method VNS, namely, the variant of Nesterov's smooth method, 
generally outperforms SDPT3 substantially even for relatively small-scale problems. 
Moreover, VNS requires much less memory than SDPT3. For example, for the instance with 
$(p,q)=(100,50)$, SDPT3 needs $10445$ mega ($\approx 10.2$ giga) bytes of memory, 
but VNS only requires about $4.23$ mega bytes of memory; for the instance with 
$(p,q)=(120,60)$, SDPT3 needs at least $16109$ mega ($\approx 15.73$ giga) bytes 
of memory, but VNS only requires about $4.98$ mega bytes of memory.
    
\begin{table}[t]
\caption{Comparison of VNS and SDPT3}
\centering
\label{result-3}
\begin{small}
\begin{tabular}{|c||rr||rr||rr|rr|}
\hline 
\multicolumn{1}{|c||}{Problem} & \multicolumn{2}{c||}{Iter} &  
\multicolumn{2}{c||}{Obj} & \multicolumn{2}{c||}{Time} &  
\multicolumn{2}{c|}{Memory} \\
\multicolumn{1}{|c||}{(p, q)} & \multicolumn{1}{c}{\sc VNS} 
& \multicolumn{1}{c||}{\sc SDPT3} & \multicolumn{1}{c}{\sc VNS} 
& \multicolumn{1}{c||}{\sc SDPT3} & \multicolumn{1}{c}{\sc VNS} 
& \multicolumn{1}{c||}{\sc SDPT3} & \multicolumn{1}{c}{\sc VNS} 
& \multicolumn{1}{c|}{\sc SDPT3} \\
\hline
(20, 10) & 36145 & 17 & 4.066570508 & 4.066570512  & 16.6 & 5.9 & 2.67 & 279 \\
(40, 20) & 41786 & 15 & 8.359912031 & 8.359912046 & 55.7 & 77.9  & 2.93 & 483 \\
(60, 30) & 35368 & 15 & 13.412029944 & 13.412029989 & 96.7 & 507.7 & 3.23 & 1338 \\
(80, 40) & 36211 & 15 & 17.596671337 & 17.596671829 &  182.9 & 2209.8 & 3.63 & 4456 \\
(100, 50) & 33602 & 19 & 22.368563640 & 22.368563657 & 272.6 & 8916.1 & 4.23 & 10445 \\
(120, 60) & 33114 & {N/A} & 26.823206950 & {N/A} & 406.6 & {N/A} & 4.98 & $>16109$ \\
\hline
\end{tabular}
\end{small}
\end{table}

\section{Concluding remarks}
\label{concl-remark}    

In this paper, we studied convex optimization methods for computing the 
trace norm regularized least squares estimate in multivariate linear 
regression. In particular, we explore a variant of Nesterov's smooth method 
proposed by Tseng \cite{tseng08} and interior point methods for computing the 
penalized least squares estimate. The performance of these methods is then 
compared using a set of randomly generated instances. We showed that
the variant of Nesterov's smooth method generally substantially outperforms 
the interior point method implemented in SDPT3 version 4.0 (beta) \cite{ToTuTo06-1}. 
Moreover, the former method is much more memory efficient. 


In Subsection \ref{reduction} we provided an approach for simplifying problem 
\eqnok{multi-reg4} which changes the variable $U$, in addition to
the data $A$ and $B$. A drawback of this approach is that it can not
handle extra constraints (not considered in this paper) on $U$.
It turns out that there exists an alternative scheme for simplifying 
problem \eqnok{multi-reg4}, i.e. one that eliminates the dependence of the data
on the (generally, large) dimension $l$, which does not change $U$.
Indeed, by performing either a QR factorization
of $A$ or a Cholesky factorization of $A^TA$, compute
an upper triangular matrix $R$ such that $R^TR=A^TA$.
Letting $G:=R^{-T}A^TB$, it is straightforward to show that problem
\eqnok{multi-reg4} can be reduced to 
\beq \label{new-form}
\min\limits_U \left\{ \|G-RU\|^2_F: \ \sum\limits_{i=1}^m \sigma_i(U) \le M\right\}.
\eeq
Clearly, in contrast to reformulation \eqnok{constr}, the above one
does not change the variable $U$ and hence extra constraints on $U$ can be
easily handled.
On the other hand, a discussion similar to that in
Subsection \ref{sma-implement-2} shows that each iteration of the variant of Nesterov's
smooth method applied to \eqnok{new-form}, or its Lagrangian
relaxation version, needs to solve
subproblem \eqnok{qp-ball} with $\Lambda$ replaced by $R$.
Since $R$ is an upper triangular matrix and 
$\Lambda$ is a diagonal matrix, the later subproblems 
are much harder to solve than subproblems of the form \eqnok{qp-ball}.
For this reason, we have opted to use
reformulation \eqnok{constr} rather than \eqnok{new-form} in this paper.    

\section*{Appendix}

In this section, we discuss some technical results that are used
in our presentation. More specifically, we discuss two ways of solving
a constrained nonlinear programming problem based on some
unconstrained nonlinear programming reformulations.

Given a set $\emptyset \neq \setX \subseteq \Re^n$ and
functions $f: \setX \to \Re$ and $h: \setX \to \Re^k$,
consider the nonlinear programming problem:
\beq \label{convex-opt1} 
f^* = \inf \, \{ f(x): x \in \setX, \, h_i(x) \le 0, \, i=1,\ldots,k\}.
\eeq
The first reformulation of \eqnok{convex-opt1} is based on
the exact penalty approach, which consists of solving
the exact penalization problem
\beq \label{penalty} 
\fg^* = \inf \, \{ \fg (x) := f(x) + \gamma [g(x)]^+ : x \in \setX \},
\eeq
for some large penalty parameter $\gamma >0$,
where $g(x) = \max \{h_i(x) : i=1,\ldots,k\}$.
To obtain stronger consequences, we make the following
assumptions about problem \eqnok{convex-opt1}:
\bi
\item[\bf A.1)] The set $\setX$ is convex and functions $f$ and $h_i$ are
convex for each $i=1,\ldots,k$;
\item[\bf A.2)] $f^* \in \Re$ and
there exists a point $\bx\in \setX$ such that $g(\bx) < 0$.
\ei

We will use the following notion throughout the paper.

\begin{definition} \rm
Consider the problem of minimizing a real-valued function $f(x)$
over a certain nonempty feasible region ${\cal F}$ contained in the
domain of $f$ and let $\bar f := \inf \{ f(x) : x \in {\cal F} \}$.
For $\epsilon \ge 0$, we say that $x_{\epsilon}$ is an
{\bf $\epsilon$-optimal solution} of
this problem if $x_{\epsilon} \in {\cal F}$ and
$f(x_{\epsilon}) \le \epsilon + \bar f$.
\end{definition}

We note that the existence of an $\epsilon$-optimal solution for
some $\epsilon>0$ implies that $\bar f$ is finite.


\begin{theorem} \label{penalty-thm}
Suppose Assumptions A.1 and A.2 hold and define
\[
\bar \gamma :=\frac{f(\bx)-f^*}{|g(\bx)|} \ge 0.
\]
For $x \in \setX$, define
\beq \label{ztheta}
z(x):=\frac{x+\theta(x)\bx}{1+\theta(x)}, \ \ \ \
\mbox{where} \ \ \theta(x):=\frac{[g(x)]^+}{|g(\bx)|}.
\eeq
Then, the following statements hold:
\begin{itemize}
\item[a)]
for every $x \in \setX$, the point $z(x)$ is a feasible solution of
\eqnok{convex-opt1};
\item[b)]
$\fg^* = f^*$ for every $\gamma \ge \bar \gamma$;
\item[c)]
for every $\gamma \ge \bar \gamma$ and $\epsilon \ge 0$,
any $\epsilon$-optimal solution of \eqnok{convex-opt1} is
also an $\epsilon$-optimal solution of \eqnok{penalty};
\item[d)]
if $\gamma \ge \bar \gamma$, $\epsilon \ge 0$
and $\xeps$ is an $\epsilon$-optimal solution of
\eqnok{penalty},
then the point $z(\xeps)$
is an $\epsilon$-optimal solution of \eqnok{convex-opt1}.
\item[e)]
if $\gamma > \bar \gamma$, $\epsilon \ge 0$
and $\xeps$ is an $\epsilon$-optimal solution of
\eqnok{penalty}, then
$f(\xeps) - f^* \le \epsilon$ and
$[g(\xeps)]^+ \le \epsilon/(\gamma-\bar \gamma)$.
\end{itemize}
\end{theorem}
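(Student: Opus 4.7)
\medskip

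The plan is to build all five parts around one construction and one convexity inequality. The construction is the retraction $z(x)$, which moves an (possibly) infeasible $x\in\setX$ along the segment toward the strictly feasible Slater point $\bx$ just far enough to restore feasibility. The convexity inequality bounds $f(z(x))$ in terms of $f(x)$, $f(\bx)$, and $\theta(x)$.

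First I would prove (a). Since $z(x)$ is a convex combination of $x,\bx\in\setX$ and $\setX$ is convex, $z(x)\in\setX$. For each $i$, convexity of $h_i$ yields
\[
h_i(z(x))\le \frac{h_i(x)+\theta(x)\,h_i(\bx)}{1+\theta(x)}
\le \frac{[g(x)]^{+}-\theta(x)|g(\bx)|}{1+\theta(x)}=0,
\]
where I used $h_i(x)\le [g(x)]^{+}$, $h_i(\bx)\le g(\bx)=-|g(\bx)|$, and the definition $\theta(x)=[g(x)]^{+}/|g(\bx)|$. So $z(x)$ is feasible for \eqnok{convex-opt1}.

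Next I would establish the master inequality
\[
f(x)\ \ge\ f^{*}-\bar\gamma\,[g(x)]^{+}\qquad\forall\,x\in\setX. \tag{$\ast$}
\]
By convexity of $f$ and feasibility of $z(x)$,
\[
f^{*}\ \le\ f(z(x))\ \le\ \frac{f(x)+\theta(x)f(\bx)}{1+\theta(x)},
\]
so $(1+\theta(x))f^{*}\le f(x)+\theta(x)f(\bx)$, i.e.\ $f(x)\ge f^{*}-\theta(x)(f(\bx)-f^{*})=f^{*}-\bar\gamma[g(x)]^{+}$ by the definition of $\bar\gamma$. Part (b) follows immediately: ($\ast$) and $\gamma\ge\bar\gamma$ give $\fg(x)=f(x)+\gamma[g(x)]^{+}\ge f^{*}$ for every $x\in\setX$, while any feasible $x$ satisfies $\fg(x)=f(x)$, so $\fg^{*}\le f^{*}$; equality follows. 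Part (c) is then a one-liner: any $\epsilon$-optimal solution $x$ of \eqnok{convex-opt1} is feasible, so $\fg(x)=f(x)\le f^{*}+\epsilon=\fg^{*}+\epsilon$.

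For (d), I would combine the convexity inequality with the $\epsilon$-optimality of $\xeps$. Starting from $(1+\theta(\xeps))f(z(\xeps))\le f(\xeps)+\theta(\xeps)f(\bx)$ and adding and subtracting $\gamma[g(\xeps)]^{+}=\gamma\theta(\xeps)|g(\bx)|$, I get
\[
(1+\theta(\xeps))f(z(\xeps))\ \le\ \fg(\xeps)+\theta(\xeps)\bigl(f(\bx)-\gamma|g(\bx)|\bigr)\ \le\ f^{*}+\epsilon+\theta(\xeps)\bigl(f^{*}-(\gamma-\bar\gamma)|g(\bx)|\bigr),
\]
using $f(\bx)=f^{*}+\bar\gamma|g(\bx)|$ and $\fg^{*}=f^{*}$. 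Since $\gamma\ge\bar\gamma$ and $\theta(\xeps)\ge 0$, the final term is $\le (1+\theta(\xeps))f^{*}+\epsilon$, which after dividing yields $f(z(\xeps))\le f^{*}+\epsilon/(1+\theta(\xeps))\le f^{*}+\epsilon$, and $z(\xeps)$ is feasible by (a). Finally, for (e), I would use ($\ast$) applied at $\xeps$ together with $\fg(\xeps)\le f^{*}+\epsilon$: this gives $f^{*}-\bar\gamma[g(\xeps)]^{+}+\gamma[g(\xeps)]^{+}\le f^{*}+\epsilon$, hence $(\gamma-\bar\gamma)[g(\xeps)]^{+}\le\epsilon$, and also $f(\xeps)\le f^{*}+\epsilon-\gamma[g(\xeps)]^{+}\le f^{*}+\epsilon$.

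The only nontrivial step is identifying the retraction $z(\cdot)$ and spotting that the correct scaling is $\theta(x)=[g(x)]^{+}/|g(\bx)|$; once this is in hand, parts (a) and ($\ast$) are routine convexity computations and everything else is bookkeeping. There is no optimization-theoretic obstacle (no subgradients, no duality) because $\bar\gamma$ is defined directly from a Slater point rather than from a Lagrange multiplier.
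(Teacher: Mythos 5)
Your proof is correct and takes essentially the same approach as the paper: the retraction $z(\cdot)$, the feasibility argument in (a), and the convexity chain $(1+\theta(x))f(z(x))\le f(x)+\theta(x)f(\bx)$ are exactly the paper's ingredients. The only cosmetic difference is bookkeeping: the paper collects everything into a single inequality,
\[
\fg(x)-f^*\ \ge\ (1+\theta(x))\bigl(f(z(x))-f^*\bigr)+(\gamma-\bar\gamma)[g(x)]^+,
\]
and reads off (b), (d), and (e) directly from it, whereas you extract the weaker corollary $(\ast)$ for (b) and (e) and rederive a separate chain for (d); both are equivalent and neither saves real work.
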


\begin{proof}
Let $x \in \setX$ be arbitrarily given.
Clearly, convexity of $\setX$, the assumption that $x^0 \in \setX$
and the definition of $z(x)$ imply that
$z(x) \in \setX$. Moreover, Assumption A.1 implies that $g:X \to \Re$ is
convex. This fact, the assumption that $g(x^0)<0$, and the
definitions of $z(x)$ and $\theta(x)$ then imply that
\[
g(z(x)) \le \frac{g(x)+\theta(x) g(\bx)}{1+\theta(x)} \le 
\frac{[g(x)]^+ -\theta(x) |g(\bx)|}{1+\theta(x)} = 0.
\]
Hence, statement (a) follows.

To prove statement (b),
assume that $\gamma \ge \bar \gamma$ and let $x \in \setX$ be given.
Convexity of $f$
yields $(1+\theta(x))f(z(x)) \le f(x)+\theta(x)f(\bx)$,
which, together with the definitions of $\bar \gamma$ and
$\theta(x)$, imply that
\beqa
\fg(x) - f^* & = &
f(x) + \gamma [g(x)]^+ - f^* \nn \\ [3pt]
&\ge& (1+\theta(x))f(z(x)) - \theta(x) f(\bx) + \gamma [g(x)]^+ - f^*
\nn \\ [3pt]
&=& (1+\theta(x))(f(z(x))-f^*) - \theta(x) (f(\bx)-f^*)
+ \gamma [g(x)]^+ \nn \\ [3pt]
&=& (1+\theta(x))(f(z(x))-f^*) + (\gamma-\bar \gamma) [g(x)]^+.
\label{easyap1}
\eeqa
In view of the assumption that $\gamma \ge \bar \gamma$ and
statement (a), the above inequality implies that
$\fg(x) - f^* \ge 0$ for every $x \in \setX$, and hence
that $\fg^* \ge f^*$. Since the inequality $\fg^* \le f^*$ obviously
holds for any $\gamma \ge 0$, we then conclude that $\fg^* = f^*$
for any $\gamma \ge \bar \gamma$.
Statement (c) follows as an immediate consequence of (b).

For some $\gamma \ge \bar \gamma$ and $\epsilon\ge 0$, assume now
that $\xeps$ is an $\epsilon$-optimal solution of \eqnok{penalty}.
Then, statement (b) and inequality \eqnok{easyap1} imply that
\beq \label{easyap2}
\epsilon \ge \fg(\xeps) - \fg^*
\ge (1+\theta(\xeps))(f(z(\xeps))-f^*) + (\gamma-\bar \gamma) [g(\xeps)]^+.
\eeq
Using the assumption that $\gamma \ge \bar \gamma$, the above
inequality clearly implies that
$f(z(\xeps)) - f^* \le \epsilon/(1+\theta(\xeps)) \le \epsilon$, and hence that
$z(\xeps)$ is an $\epsilon$-optimal solution of \eqnok{convex-opt1}
in view of statement (a). Hence, statement (d) follows.
Moreover, if $\gamma > \bar \gamma$,
we also conclude from \eqnok{easyap2} that $[g(\xeps)]^+ \le \epsilon / (\gamma-\bar \gamma)$.
Also, the first inequality of \eqnok{easyap2} implies that
$f(\xeps) - f^* \le f(\xeps) + \gamma[g(\xeps)]^+ - f^* =
\fg(\xeps) - \fg^* \le \epsilon$, showing that statement (e) holds.
\end{proof}

\vgap

We observe that the threshold value $\bar \gamma$ depends on the
optimal value $f^*$, and hence can be computed only for those
problems in which $f^*$ is known. If instead a lower bound $f_l \le f^*$
is known, then choosing the penalty parameter $\gamma$ in
problem \eqnok{penalty} as $\gamma:=(f(x^0)-f_l)/|g(x^0)|$ guarantees
that an $\epsilon$-optimal solution $\xeps$ of \eqnok{penalty} yields the
$\epsilon$-optimal solution $z(\xeps)$ of \eqnok{convex-opt1},
in view of Theorem \ref{penalty-thm}(c).

The following result, which is a slight variation of a result due to
H.\ Everett (see for example pages 147 and 163 of \cite{HiLe93-1}),
shows that approximate optimal solutions of Lagrangian
subproblems associated with \eqnok{convex-opt1} yield
approximate optimal solutions of a perturbed version of \eqnok{convex-opt1}.

\begin{theorem} (Approximate Everett's theorem) \label{lagr-eps}
Suppose that for some $\lambda \in \Re^k_{+}$ and $\epsilon \ge 0$,
$\xleps$ is an $\epsilon$-optimal solution of the problem
\beq \label{lagr-prob}
f^*_{\lambda} = \inf \, \left\{ f(x) + \sum_{i=1}^k \lambda_i h_i(x)
: x \in \setX \right\}.
\eeq
Then, $\xleps$ is an $\epsilon$-optimal solution of the problem 
\beq \label{constr-prob}
f^*_{\epsilon\lambda} = \inf \, \left\{ f(x) : x \in \setX, \,
h_i(x)\le h_i(\xleps), \, i=1,\ldots,k \right\}.
\eeq
\end{theorem}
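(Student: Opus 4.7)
The plan is to prove the two requirements for $\xleps$ to be an $\epsilon$-optimal solution of \eqnok{constr-prob}: feasibility and $\epsilon$-optimality of the objective value. Feasibility is immediate: by construction the constraints of \eqnok{constr-prob} are $h_i(x) \le h_i(\xleps)$, so $\xleps$ itself satisfies them with equality, and $\xleps \in X$ because it is an $\epsilon$-optimal solution of \eqnok{lagr-prob}.

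For the $\epsilon$-optimality part, I would take an arbitrary feasible point $x$ of \eqnok{constr-prob} and compare values. By definition of $f^*_\lambda$ as an infimum over all of $X$, we have
\[
f(x) + \sum_{i=1}^k \lambda_i h_i(x) \;\ge\; f^*_\lambda.
\]
On the other hand, the assumption that $\xleps$ is $\epsilon$-optimal for \eqnok{lagr-prob} gives
\[
f(\xleps) + \sum_{i=1}^k \lambda_i h_i(\xleps) \;\le\; f^*_\lambda + \epsilon.
\]
Subtracting these two inequalities yields
\[
f(\xleps) - f(x) \;\le\; \sum_{i=1}^k \lambda_i \bigl( h_i(x) - h_i(\xleps) \bigr) + \epsilon.
\]
Now I use the two sign conditions that are built into the statement: $\lambda_i \ge 0$ for every $i$, and the feasibility of $x$ for \eqnok{constr-prob} gives $h_i(x) - h_i(\xleps) \le 0$. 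Hence each summand on the right is nonpositive, so $f(\xleps) - f(x) \le \epsilon$. Taking the infimum over feasible $x$ in \eqnok{constr-prob} gives $f(\xleps) \le f^*_{\epsilon\lambda} + \epsilon$, which is the definition of $\epsilon$-optimality.

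There is no real obstacle here; the argument is just the standard weak-duality style comparison, and the only thing to be careful about is that the perturbed constraint set in \eqnok{constr-prob} is precisely tailored so that for any competitor $x$ the Lagrangian penalty term $\sum_i \lambda_i h_i(x)$ is no larger than $\sum_i \lambda_i h_i(\xleps)$. That is exactly what makes the $\epsilon$-gap in the Lagrangian problem transfer verbatim to an $\epsilon$-gap in the perturbed constrained problem.
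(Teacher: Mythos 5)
Your proof is correct and follows essentially the same weak-duality chain of inequalities as the paper's: you use $\epsilon$-optimality for the Lagrangian problem together with the infimum definition of $f^*_\lambda$, then drop the nonpositive term $\sum_i \lambda_i(h_i(x)-h_i(\xleps))$ using $\lambda \ge 0$ and the perturbed constraints. The only cosmetic difference is that you spell out feasibility of $\xleps$ as a separate step, which the paper leaves implicit.
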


\begin{proof}
Let $\tilde x$ be a feasible solution of \eqnok{constr-prob}.
Since $\xleps$ is an $\epsilon$-optimal solution of \eqnok{lagr-prob}, 
we have $f(\xleps) + \sum_{i=1}^k \lambda_i h_i(\xleps) \le f^*_{\lambda} + \epsilon$.
This inequality together with the definition of $f^*_{\lambda}$ in
\eqnok{lagr-prob} implies that
\beqas
f(\xleps) &\le& f^*_{\lambda} - \sum_{i=1}^k \lambda_i h_i(\xleps) + \epsilon \ \le \
f(\tilde x) + \sum_{i=1}^k \lambda_i [h_i(\tilde x)-h_i(\xleps)]+ \epsilon
\ \le \ f(\tilde x) + \epsilon,
\eeqas
where the last inequality is due to the fact that $\lambda_i \ge 0$ for
all $i = 1,\ldots,k$ and $\tilde x$ is feasible solution of \eqnok{constr-prob}.
Since the latter inequality holds for every feasible solution $\tilde x$ of
\eqnok{constr-prob}, we conclude that
$f(\xleps) \le f^*_{\epsilon\lambda} + \epsilon$, and hence that
$\xleps$ is an $\epsilon$-optimal solution of \eqnok{constr-prob}.
\end{proof}

\vgap

If our goal is to solve problem
$\inf \{ f(x) : x \in \setX, \, h_i(x) \le b_i,\, i=1,\ldots,k \}$
for many different
right hand sides $b \in \Re^k$, then, in view of the above result,
this goal can be accomplished by minimizing the Lagrangian subproblem
\eqnok{lagr-prob} for many different Lagrange multipliers
$\lambda \in \Re^k_+$. We note that this idea is specially popular
in statistics for the case when $k=1$.

\section*{Acknowledgements}
The authors would like to thank two anonymous referees and the associate editor 
for numerous insightful comments and suggestions, which have greatly improved 
the paper.

\end{document}